\numberwithin{equation}{section}
\newcommand{\mycaption}[1]{\caption{{\sf \small #1}}}
\newcommand{\beq}{\begin{equation}}
\newcommand{\eeq}{\end{equation}}
\newcommand{\ba}{\begin{array}}
\newcommand{\ea}{\end{array}}
\newcommand{\bea}{\begin{eqnarray}}
\newcommand{\eea}{\end{eqnarray}}
\newcommand{\bean}{\begin{eqnarray*}}
\newcommand{\eean}{\end{eqnarray*}}
\newcommand{\eref}[1]{(\ref{#1})}
\newcommand{\nn}{\nonumber}
\newcommand{\comment}[1]{}
\newcommand{\rk}{\mathop{{\rm rk}}}
\newcommand{\cO}{{\cal O}}
\newcommand{\cN}{{\cal N}}
\newcommand{\cF}{{\cal F}}
\newcommand{\cA}{{\cal A}}
\newcommand{\cB}{{\cal B}}
\newcommand{\cC}{{\cal C}}
\newcommand{\cL}{{\cal L}}
\newcommand{\cV}{{\cal V}}
\newcommand{\Hom}{{\rm Hom}}
\def\cjn1{{\cA, \cC^*\otimes \wedge^j \cN^*}}
\def\bjn1{{\cA, \cB^*\otimes \wedge^j \cN^*}}
\def\vjn1{{\cA, \cV^*\otimes \wedge^j \cN^*}}
\def\cjn2{{\cA, \cC\otimes \wedge^j \cN^*}}
\def\bjn2{{\cA, \cB\otimes \wedge^j \cN^*}}
\def\vjn2{{\cA, \cV\otimes \wedge^j \cN^*}}
\def\fnote#1#2{\begingroup\def\thefootnote{#1}\footnote{#2}
     \addtocounter{footnote}{-1}\endgroup}
\newtheorem*{conjecture}{Conjecture}
\newtheorem{lemma}{Lemma}
\newtheorem*{lemm}{Lemma I}
\newtheorem*{mma}{Lemma II}
\newcommand{\cK}{{\cal K}}
\begin{document}

\vspace{1cm}

\title{{\huge \bf
Stability Walls in Heterotic Theories
}}

\vspace{2cm}

\author{
Lara B. Anderson${}^{1,2}$,
James Gray${}^{3}$,
Andre Lukas${}^{3}$,
Burt Ovrut${}^{1,2}$
}
\date{}
\maketitle
\begin{center} {\small ${}^1${\it School of Natural Sciences,
      Institute for Advanced Study, \\ Princeton, New Jersey 08540,
      U.S.A.} \\ ${}^2${\it Department of Physics, University of
      Pennsylvania, \\ Philadelphia, PA 19104-6395, U.S.A.}
    \\${}^3${\it Rudolf Peierls Centre for Theoretical Physics, Oxford
      University,\\
      $~~~~~$ 1 Keble Road, Oxford, OX1 3NP, U.K.}\\
    \fnote{}{andlara@physics.upenn.edu, j.gray1@physics.ox.ac.uk, lukas@physics.ox.ac.uk, ovrut@elcapitan.hep.upenn.edu} }
\end{center}

\abstract{\noindent We study the sub-structure of the heterotic K\"ahler moduli space due to the presence of non-Abelian internal gauge fields from the perspective of the four-dimensional effective theory. Internal gauge fields can be supersymmetric in some regions of the K\"ahler moduli space but break supersymmetry in others. In the context of the four-dimensional theory, we investigate what happens when the K\"ahler moduli are changed from the supersymmetric to the non-supersymmetric region. Our results provide a low-energy description of supersymmetry breaking by internal gauge fields as well as a physical picture for the mathematical notion of bundle stability.  Specifically, we find that at the transition between the two regions an additional anomalous $U(1)$ symmetry appears under which some of the states in the low-energy theory acquire charges. We compute the associated D-term contribution to the four-dimensional potential which contains a K\"ahler-moduli dependent Fayet-Iliopoulos term and contributions from the charged states. We show that this D-term correctly reproduces the expected physics.  Several mathematical conclusions concerning vector bundle stability are drawn from our arguments. We also discuss possible physical applications of our results to heterotic model building and moduli stabilisation.
}

\newpage

\tableofcontents

%
%

\section{Introduction}

Heterotic compactifications on Calabi-Yau manifolds necessarily have
gauge field expectation values (vevs) in the internal dimensions.
This feature, which is a consequence of demanding that the total
charge for the Neveu-Schwarz form should vanish on the internal
compact space, gives rise to much of the complexity and structure of
these theories~\cite{Candelas:1985en,Green:1987mn}.  One of the interesting properties of these
gauge field vevs concerns their supersymmetry preserving
properties. The internal field strength, $F$, is usually chosen so as
to preserve ${\cal N}=1$ supersymmetry in the four-dimensional
effective theory. This can be imposed by demanding that the
ten-dimensional gaugino supersymmetry variations vanish. This leads to
the conditions \beq\label{HYME} g^{a\overline{b}}F_{\overline{b}a}=0\;
,\quad F_{ab}=F_{\overline{a}\overline{b}}=0 \eeq which are known as
the Hermitian Yang-Mills equations (here, $a$ and $\bar{b}$ are
holomorphic and anti-holomorphic indices on the Calabi-Yau,
respectively).  However, if one chooses gauge fields satisfying the
Hermitian Yang-Mills equations and, hence, preserving supersymmetry at
one point in moduli space, and then changes the values of the moduli,
one can find that Eqs.~\eqref{HYME} fail to have a solution and
supersymmetry becomes broken~\cite{duy1,duy2}. More specifically, it
is possible to demarcate regions in K\"ahler moduli space where the
gauge field vevs can preserve supersymmetry and regions where they
necessarily break it \cite{Sharpe:1998zu,Anderson:2009sw}. Given the explicit dependence of the
Eqs.~\eqref{HYME} on the metric, and hence the K\"ahler moduli, such a
behaviour is perhaps not surprising.

What happens in the effective field theory when the moduli evolve such
that the gauge fields break supersymmetry? One can see from the
dimensional reduction of the ten-dimensional effective action of
the $E_8 \times E_8$ heterotic theory that there will be a positive definite potential in the
non-supersymmetric region of K\"ahler moduli space. The argument goes as follows. Consider the
following three terms in the ten-dimensional effective action.
\begin{eqnarray}
\label{spartial}
S_{\textnormal{partial}} = -\frac{1}{2 \kappa_{10}^2} \frac{\alpha'}{4}\int_{{\cal M}_{10}} \sqrt{-g} \left\{ \textnormal{tr} (F^{(1)})^2+\textnormal{tr} (F^{(2)})^2 - \textnormal{tr} R^2 \right\}~.
\end{eqnarray}
The notation here is standard \cite{Green:1987mn} with the field
strengths $F^{(1)}$ and $F^{(2)}$ being associated to the two $E_8$
factors in the gauge group. One consequence of the ten-dimensional
Bianchi Identity,
\begin{eqnarray}
  dH = -\frac{3 \alpha'}{\sqrt{2}} \left( \textnormal{tr} F^{(1)} \wedge F^{(1)} +\textnormal{tr} F^{(2)} \wedge F^{(2)} - \textnormal{tr} R \wedge R \right) \;,
\end{eqnarray}
is its integrability condition,
\begin{eqnarray}
\label{intcond}
\int_{M_{6}} J \wedge \left( \textnormal{tr} \;F^{(1)} \wedge F^{(1)} + \textnormal{tr} \; F^{(2)} \wedge F^{(2)} - \textnormal{tr}\; R \wedge R \right) = 0 \;,
\end{eqnarray}
where $J$ is the K\"ahler form.  Now, suppose that we begin with a
supersymmetric field configuration, and then vary the K\"ahler moduli
while keeping the other moduli fixed.  The background gauge field
strengths are then $(1,1)$ forms to lowest order (as is the curvature
two-form). Using this observation, and the fact that we are working,
again to lowest order, with a Ricci flat metric on a manifold of
$SU(3)$ holonomy, equation \eqref{intcond} can be rewritten as
follows:
\begin{eqnarray}
  \int_{M_{10}} \sqrt{-g} \left( \textnormal{tr} (F^{(1)})^2+\textnormal{tr} (F^{(2)})^2 - \textnormal{tr} R^2  - \textnormal{tr} (F^{(1)}_{a \bar{b}} g^{a \bar{b}})^2   - \textnormal{tr} (F^{(2)}_{a \bar{b}} g^{a \bar{b}})^2 \right) = 0~.
\end{eqnarray}
Using this relation in \eqref{spartial}, we arrive at the following result
\begin{eqnarray}
\label{finalintro}
S_{\textnormal{partial}}= - \frac{1}{2 \kappa_{10}^2} \frac{\alpha'}{4} \int_{{\cal M}_{10}} \sqrt{-g} \left\{ \textnormal{tr} (F^{(1)}_{a \bar{b}} g^{a \bar{b}})^2  + \textnormal{tr} (F^{(2)}_{a \bar{b}} g^{a \bar{b}})^2 \right\}~.
\end{eqnarray}
The terms in Eq.~\eqref{finalintro} form a part of the ten-dimensional
theory which does not contain any four-dimensional derivatives. It
therefore contributes, upon dimensional reduction, to the potential of
the four dimensional theory. In the case of a supersymmetric field
configuration, the terms in the integrand of \eqref{finalintro}
vanish, these being precisely the squares of the first equation in
\eqref{HYME}.  Thus, in this case, no potential is generated. However,
if the K\"ahler moduli are varied such that the gauge field vevs are
no longer supersymmetric, \eqref{finalintro} no longer vanishes and we
obtain a positive definite contribution to the potential energy seen
in four dimensions. Thus, we are led to a picture of a perturbative
potential which, while positive definite in the non-supersymmetric
regions of moduli space, vanishes precisely where the gauge field vevs
preserve supersymmetry.

Beyond what is described above, it might seem difficult to write down
the exact expression for this potential in terms of the moduli
fields. Naively it seems like we need to know the metric and gauge
connection on the Calabi-Yau $3$-fold. These quantities are of course
unknown, except possibly numerically
\cite{DonaldsonNumerical,Douglas:2006hz,Douglas:2006rr,Braun:2007sn}. In
fact, however, one can analytically derive the exact form of this
potential as an explicit function of the moduli fields. This will be the main focus of the present paper.

Before we can discuss the explicit form of this potential, it is
useful to briefly review the mathematical language normally used to
describe supersymmetry within a heterotic compactification. The
question of whether a supersymmetric vacuum exists can be answered by
a mathematical analysis of the associated holomorphic vector bundle,
$V$, based on the Donaldson-Uhlenbeck-Yau theorem~\cite{duy1,duy2} and
the notion of slope-stability. We will explicitly carry this out later
in the paper. For the purpose of the present general discussion it
suffices to know that the supersymmetry properties of $V$ are governed
by a (maximally) destabilizing sub-bundle $\cF\subset V$ and a number
associated it, called the slope $\mu (\cF)$, which is a function of
the K\"ahler moduli, $t^i$, of the Calabi-Yau manifold. The vector
bundle $V$ is slope-stable and, hence, the associated gauge field is
supersymmetric, in the part of K\"ahler moduli space where $\mu (\cF
)<\mu(V)$, and it is unstable and supersymmetry is broken where $\mu (\cF)
>\mu(V)$. The boundary between those regions, defined by $\mu(\cF)=\mu(V)$,
divides the K\"ahler cone into regions of preserved and broken
supersymmetry. Such a co-dimension one ``boundary'' will be referred
to as a stability ``wall" in the K\"ahler cone. In the following
sections, we will demonstrate that, in fact, the potential given in
\eref{finalintro}, reproduces this structure.

\vspace{0.1cm}

We are now in a position to summarize our main results. For
concreteness, we will illustrate the structure of our effective theory
for a bundle $V$ with an internal gauge group $G=SU(3)$, but
analogous statements hold for other $SU(n)$ groups. In this case, $\mu(V)=0$. For a general
point in the supersymmetric region of the K\"ahler moduli space, that
is for $\mu (\cF)<0$, the low-energy gauge group is $E_6$ (times a
possible hidden gauge group which is not relevant to our discussion),
the commutant of $SU(3)$ within $E_8$. The matter field content
consists of a certain number of families and anti-families in ${\bf
  27}$ and $\overline{\bf 27}$ representations, respectively, plus a
number of singlet fields which can be interpreted as the bundle moduli
of $V$. For specific examples, the number of these multiplets can be
computed from the bundle cohomology of $V$ and we will do this later
in the paper. So far, this is simply the field content of a standard
heterotic Calabi-Yau compactification.

Next, we consider the theory at the stability wall, that is, the
boundary between supersymmetric and non-supersymmetric regions in the
K\"ahler cone where $\mu (\cF)=0$. Here, we find that the structure
group of the bundle ``degenerates'' to $S(U(2)\times U(1))$ and,
hence, the low-energy gauge group enhances from $E_6$ to $E_6\times
U(1)$. This theory has the same chiral asymmetry between ${\bf 27}$
and $\overline{\bf 27}$ multiplets as the theory at a generic
supersymmetric point in moduli space (although their individual
numbers may change), bundle moduli for the $S(U(2)\times U(1))$ bundle
and additional singlet fields $C^L$. The families/anti-families and
the $C^L$ fields carry a charge under the additional $U(1)$
symmetry. It is well-known, in the context of heterotic
compactifications~\cite{Dine:1986zy,Dine:1987xk,Lukas:1999nh}, that a
low-energy $U(1)$ symmetry which arises from of a $U(1)$ factor in the
internal gauge group is anomalous in the Green-Schwarz sense. The
$U(1)$ vector field is massive as a consequence of the Higgs
mechanism. In addition, associated to this $U(1)$ is a D-term which
contains a Fayet-Illiopolous (FI) contribution\footnote{We note that
  for internal gauge fields with structure group $G=U(1)$, it is
  known~\cite{Dine:1986zy,Dine:1987xk,
    Lukas:1999nh,Blumenhagen:2005ga} that Eq.~\eqref{finalintro} leads
  to a D-term potential associated with a Green-Schwarz anomalous $U(1)$
  symmetry~\cite{Dine:1987xk,Lukas:1999nh}. In fact, it is not
  difficult to derive this D-term potential from
  Eq.~\eqref{finalintro}. In the present paper, however, we are
  interested in the case of non-Abelian internal gauge groups,
  specifically $G=SU(n)$.}. In our case, we find that the U(1) D-term
takes the following form at lowest order in the expansions of
heterotic M-theory, and close to the boundary between the
supersymmetric and non-supersymmetric regions: \bea\label{FI-alpha}
D^{U(1)} = f(t^i) - \sum_{M ,\bar{N}} Q^M G_{M \bar{N}} C^M
\bar{C}^{\bar{N}}~.  \eea Here $G_{L\bar{M}}$ is a positive definite
metric and $Q^L$ are the $U(1)$ charges of the fields $C^L$. The FI
term, $f(t^i)$, takes the form (up to a positive constant of
proportionality)
\begin{equation}
 f(t^i) \sim \frac{\mu(\cF)}{\cV}
\end{equation}
with $\cV$ the Calabi-Yau volume and $\mu(\cF)$ is the slope parameter
(described above) of a sub-bundle. The associated D-term potential is the
explicit form of the potential described in equation
\eref{finalintro}.

Let us discuss this D-term \eref{FI-alpha} in the various regions of
the K\"ahler cone. At the stability wall, $\mu(\cF)=0$, the FI term
vanishes and, hence, the fields $C^M$ have a vanishing vacuum
expectation value. The combination of K\"ahler moduli
perpendicular to the stability wall receives a mass from the FI term
and represents the Higgs particle. Its axionic superpartner is
absorbed by the $U(1)$ vector field. All of the $C^L$ fields are
massless at the stability wall. Now we move into the region
$\mu(\cF)<0$ where supersymmetry should be preserved. In this region,
the FI term is negative and the fields $C^M$ develop a compensating
vev to set $D^{U(1)}=0$. Of course, this only works if there is at
least one negative $U(1)$ charge $Q^L$ and we will verify that this is
indeed the case. In this way, we find that supersymmetry is preserved
in the region $\mu(\cF)<0$, as expected. One might also ask about
matching the number of states we observe in this theory to the results
obtained from a standard analysis of the supersymmetric region. We
find that when the fields $C^L$ develop a vev, the $U(1)$ gauge boson
receives an additional contribution to its mass and eventually becomes
so massive that it should be dropped from the low-energy spectrum. In
this way, we recover the $E_6$ symmetry at a generic superymmetric
point. Further, due to the non-vanishing $C^L$ vevs, one combination
of fields, predominately made up from $C^L$ fields, now becomes the
Higgs multiplet and should be removed from the spectrum. For a
matching of states between the theory at the stability wall and at a
generic supersymmetric point in moduli space we need, therefore, that
the number of $S(U(2)\times U(1))$ bundle moduli plus the number of
$C^L$ fields equals the number of $SU(3)$ bundle moduli plus
one. Again, we will explicitly verify that this is true in
general. What happens if we move into the region $\mu(\cF)>0$ where we
expect supersymmetry to be broken? The above D-term will only lead to
broken supersymmetry in this region if the $C^L$ fields cannot
compensate for the, now positive, FI term. In other words, {\em all}
of the charges, $Q^L$, need to be negative if our D-term is to
reproduce the supersymmetry properties of the gauge bundle as derived
in higher dimensions. We will show that this is indeed always the
case.  In summary, the above D-term reproduces all of the expected
features of supersymmetry breaking induced by internal gauge fields, a
subject usually studied in the context of algebraic geometry. As such,
it provides a physical picture for the mathematical notion of slope
stability for vector bundles and it opens up a range of physical
applications, for example in relation to heterotic model building and
moduli stabilisation.

\vspace{0.2cm}

In the remainder of this paper, we derive the potential described
above, in detail, from first principles. In the next section, we
discuss the ten-dimensional picture, by introducing the mathematical
description of supersymmetric and non-supersymmetric gauge field vevs
in terms of vector bundles via the theorem of Donaldson, Uhlenbeck, and Yau
\cite{duy1,duy2}. We describe how one may study any given model to see
if it preserves or breaks supersymmetry at a given point in moduli
space. Section 3 uses this technology to show, from a ten-dimensional
perspective, how supersymmetric and non-supersymmetric regions, with
stability walls between them, arise in the K\"ahler cone.  In Section
4, we describe the four-dimensional effective description of this
phenomenon and derive the D-term~\eqref{FI-alpha}.  In Section 5, we
confirm the picture described in this introduction by studying the
vacuum space of the four-dimensional effective theory.  Higher order
corrections are explored in Section 6. In Section \ref{conclusions}, we
conclude and discuss further work. Certain mathematical details and a
conjecture are provided in Appendix A. In Appendix B, we provide another detailed example of a bundle exhibiting a stability wall in the K\"ahler cone and the explicit field theory describing it.

\section{Vector Bundle Stability in Heterotic
  Compactifications}\label{stability}
A supersymmetric heterotic string compactification requires the
geometric input of a complex three-dimensional Calabi-Yau manifold,
$X$, and a holomorphic vector bundle, $V$, defined over $X$. The gauge
connection, $A$, on $V$ with associated field strength, $F$, must
satisfy the Hermitian Yang-Mills equations~\eqref{HYME}.  On a
holomorphic vector bundle, $V$, one can always choose a connection with
a purely $(1,1)$ field strength, $F$, so that the last two conditions in
\eref{HYME}, $F_{ab}=F_{\overline{a}\overline{b}}=0$, are satisfied.
To solve the first equation~\eqref{HYME},
$g^{a\overline{b}}F_{\overline{b}a}=0$, is more difficult, at least
for the case of non-Abelian bundle structure groups. However, for
Calabi-Yau manifolds, there exists a powerful way of transforming this
equation into a problem in algebraic geometry. For K\"ahler manifolds,
the Donaldson-Uhlenbeck-Yau theorem \cite{duy1,duy2} states that on
each \textit{poly-stable} holomorphic vector bundle $V$, there exists
a unique connection satisfying the Hermitian Yang-Mills equation
\eref{HYME}. Thus, to verify that our vector bundle is consistent with
supersymmetry we need to verify that it possesses the property of
poly-stability.

The concept of stability of a bundle (or coherent sheaf), $\cF$, over a
K\"ahler three-fold, $X$, is defined by means of a quantity called the \textit{slope}:
\begin{equation}\label{slope}
\mu (\cF)\equiv \frac{1}{\rk(\cF)}\int_{X}c_{1}(\cF)\wedge J \wedge J \; . 
\end{equation}
Here, $J$ is the K\"{a}hler form on $X$, and ${\rm rk}(\cF)$ and
$c_1(\cF)$ are the rank and the first Chern class of $\cF$,
respectively. A bundle $V$ is now called stable (resp.~semi-stable) if
for all sub-sheaves $\cF\subset V$ with $0<{\rm rk}(\cF)<{\rm rk}(V)$
the slope satisfies \beq\label{slope_req} \mu (\cF)<\mu
(V)~~~~~(\text{resp.}~\mu (\cF)\leq \mu (V)) \;.  \eeq A bundle is
poly-stable if it can be decomposed into a direct sum of stable
bundles ($V=\bigoplus_n V_n$), which all have the same slope
($\mu(V_i)=\mu(V)$). It follows that every stable bundle is
poly-stable and, in turn, every poly-stable bundle is
semi-stable\footnote{Note that the converse to these statements do not
  hold. That is, not every semi-stable bundle is poly-stable,
  etc.}. Thus, as a series of implications: stable $\Rightarrow$
poly-stable $\Rightarrow$ semi-stable.

In this work, we will consider holomorphic vector bundles with
structure group $SU(n)$ with $n=3,4,5$. Since the slope of these
bundles vanishes($c_1(V) = 0$ for $SU(n)$ bundles), in order for
$V$ to be stable we must have that all proper sub-sheaves,
$\mathcal{F}$, of $V$ have strictly negative slope.  Thus if
$\mathcal{F} \subset V$ we require, \beq \mu(\mathcal{F}) < 0~.
\eeq\label{neg_slope} But what qualifies a sheaf $\mathcal{F}$ to be a
sub-sheaf of $V$? This is simply the condition that it has smaller rank
and that there exists an embedding $\mathcal{F} \hookrightarrow V$. The
space of homomorphisms between $\mathcal{F}$ and $V$, denoted
$\Hom_X(\mathcal{F},V)$, is isomorphic to the space of global
holomorphic sections $H^0(X, \mathcal{F}^* \otimes V)$. Hence, we have
that
\beq\label{hom} V ~{\rm stable } ~\Longleftrightarrow \mu(\mathcal{F}) <
0 ~~\forall \;\mathcal{F} ~{\rm s.t.}~ 0<\rk(\mathcal{F}) < n ~{\rm and }~
H^0(X, \mathcal{F}^* \otimes V) \ne 0 \ .  \eeq

To begin our study of stability, we will first re-write the slope
condition \eref{slope} into a form better suited to our
purposes. Given a basis of harmonic $(1,1)$ forms $J_i$ on $X$, where
$i,j=1,\ldots ,h^{1,1}(X)$, we expand the K\"ahler form as $J=t^iJ_i$
with the $t^i$ being the K\"ahler moduli. Inserting this into
Eq.~\eqref{slope}, the slope of a sheaf $\cF$ can then be written as
\begin{equation}
\mu(\mathcal{F}) =\frac{1}{rk(\mathcal{F})} d_{ijk}c_{1}^{i}(\mathcal{F})t^{j}t^{k}~, \label{t-slope}
\end{equation}
where the $d_{ijk}=\int_XJ_i\wedge J_j\wedge J_k$ are the triple
intersection numbers of $X$, and $c_1(\cF)=c_1^i(\cF)J_i$. It is useful
to define the ``dual K\"ahler moduli'', $s_i$, by \beq\label{sdef} s_{i}
\equiv d_{ijk}t^{j}t^{k} \; .  \eeq The slope then turns into
\beq\label{ss-slope} \mu(\mathcal{F})=\frac{1}{rk({\cal F})}
s_{i}c_{1}^{i}(\mathcal{F}) \; , \eeq and is, hence, given by a simple
dot product between the first Chern class of $\cF$ and the dual
K\"ahler moduli $s_i$. As stated above, we are interested in
bundles $V$ with structure group $G=SU(n)$ so that $c_1(V)=0$ and $\mu
(V)=0$.  Using Eq.~\eqref{ss-slope}, stability for such a bundle $V$
then amounts to the condition \beq\label{s-slope} \mu(\mathcal{F}) =
\frac{1}{rk({\cal F})} c_{1}^{i}(\mathcal{F})s_{i}<0 \; , \eeq for all
$\mathcal{F} \subset V$. Hence, for a given de-stabilizing sub-sheaf
$\cF\subset V$, Eq.~\eqref{s-slope} divides $s_i$ space into two
regions (the condition $\mu(\cF)=0$ defines a co-dimension $1$ hyperplane in $s_i$ space). To understand stability of a bundle, we need to analyse,
for all relevant sub-sheaves $\cF\subset V$, how these regions
relate to the K\"ahler cone (the allowed set of K\"ahler parameters
$s_i$). Concretely, this amounts to finding a region in the K\"ahler
cone which is not de-stabilised by any sub-sheaf $\cF\subset V$.

In the mathematics literature, when a bundle is called `stable' this
is frequently taken to mean stable with respect to \emph{all} possible
$s_{i}$. A choice of a vector $s_i$ is referred to as a `polarization'. That
is, the bundle which is stable with respect to all polarizations is
stable everywhere in the K\"ahler cone. However, viewed from the
perspective of physics, this is actually a stronger condition than we
require. In a heterotic compactification, we shall define our
low-energy effective theory perturbatively around a particular vacuum
corresponding to some point in moduli space. So, it is sufficient to show
that the bundle is stable \emph{somewhere} in the K\"ahler cone (with
the hope that we may eventually stabilize the moduli within
this region). In previous work \cite{stability_paper, Anderson:2008ex}, several of the
authors made use of this viewpoint to formulate stability criteria
for bundles defined over Calabi-Yau manifolds with $h^{1,1}(X)
>1$. The resulting algorithm is a generalization of the stability condition
given by Hoppe~\cite{hoppe} which, in its original form, applies to Calabi-Yau manifolds
with $h^{1,1}(X)=1$. In Ref.~\cite{stability_paper}, we describe this
algorithmic method for determining the stable regions of a
bundle. We will not repeat the details of this analysis here, but
rather highlight some of its important features in the following.

\subsection{Algorithmic testing of slope stability}
Vector bundle stability is a notoriously difficult property to
prove. The main obstacle arises in classifying all possible
sub-sheaves, $\cF$, of the bundle $V$. There are no general techniques
known for identifying such sub-sheaves or for computing their
topological properties such as Chern classes (and, hence, their slopes
from Eq.~\eref{slope}). However, as described in Ref.~\cite{stability_paper},
despite these obstacles, progress can be made by systematically
constraining the possible sub-sheaves, $\cF\subset V$.

\subsubsection{Sub-line bundles and stability}\label{line-bun}
In this section we will demonstrate that, in order to prove
stability at any given point in K\"ahler moduli space, it is
sufficient to test the slope criteria \eref{slope_req} for all
\emph{sub-line bundles} $\mathcal{L}$ of certain anti-symmetric powers, $\wedge^kV$,
of the bundle $V$.

To begin, consider a rank-$n$ vector bundle $V$ over a projective
variety $X$. If $\mathcal{F}$ is a sub-sheaf of $V$ then it injects
into $V$ via the resolution \beq\label{sub-sheaf} 0 \to \mathcal{F}
\to V \to \mathcal{K} \to 0~, \eeq with $\rk(\mathcal{F}) < \rk(V)$
and $\mathcal{K}=V/\cF$. We shall consider such sub-sheaves one rank
at a time. First, we observe that since $V$ is a vector bundle, it is
torsion-free and, thus, has no rank-zero sub-sheaves. So, we begin with
the case of a rank one sub-sheaf. Since $\mathcal{F}$ is torsion free,
there is an injection \beq\label{double-dual} \mathcal{F}
\stackrel{i}{\longrightarrow} \mathcal{F}^{**} \eeq where
$\mathcal{F}^{**}$ is the double-dual of $\mathcal{F}$. A locally free
coherent sheaf, $\mathcal{L}$, is isomorphic to its double-dual, that
is $\mathcal{L}^{**} \approx \mathcal{L}$. Since $\mathcal{F}$ is rank
one and torsion free, it can be shown that $\mathcal{F}^{**}$ is
locally free and, hence, a line bundle \cite{Friedmann}. Dualizing the
sequence \eref{sub-sheaf} twice and using \eref{double-dual} we have
\beq \mathcal{F} \subset \mathcal{F}^{**} \subset V^{**} \approx
V~. \eeq It is straightforward to show that
$\mu(\mathcal{F})=\mu(\mathcal{F}^{**})$. Thus, instead of checking
the slope condition \eref{slope_req} for all rank-one torsion-free
sub-sheaves of $V$, it suffices to check it for all
sub-line bundles. But what about sub-sheaves of higher rank?

Let $\mathcal{F}$ be a torsion free sub-sheaf  of rank $k$ (with
$1<k<n$). Once again, we have an inclusion $0 \to \mathcal{F} \to V$
which in turn induces a mapping \beq \wedge^{k}\mathcal{F}
\to\wedge^{k}V \eeq which can also be shown to be an injection
\cite{Donagi:2003tb}. By definition of the anti-symmetric tensor power
$\wedge^{k}$, $\wedge^{k}\mathcal{F}$ is a rank one sheaf.  Since
$\mathcal{F}$ is torsion free, so is $\wedge^{k}\mathcal{F}$
\cite{Kobayashi}. Next, by an argument similar to the one given above
(in and around \eqref{double-dual}), we can argue that there is a line
bundle $\cL$ associated to $\wedge^{k}\mathcal{F}$, namely
$\cL=(\wedge^k\cF)^{**}$.  Note that in general for a rank $n$ bundle
$V$, \beq\label{wedgec1} c_1(\wedge^{k}V)=\binom{n-1}{k-1}c_1(V)
\;.\eeq Thus we observe that for $SU(n)$ bundles, which have
$c_1(V)=0$, it follows that $c_1(\wedge^{k}V)=0$ as well. Likewise, we
see that applied to a rank $k$, sub-line bundle, $\mathcal{F}$,
\eref{wedgec1} gives us $\mu(\wedge^{k}\mathcal{F})
=\mu(\mathcal{F})$. Therefore, for each rank $k$ de-stabilizing
sub-sheaf of $V$ we have a corresponding de-stabilizing sub-line
bundle $\cL\subset\wedge^{k}V$ with the same slope as $\cF$. Thus in
proving stability of an $SU(n)$ vector bundle $V$, we need only show
that if $\mathcal{L} \subset \wedge^{k}V$, then \beq \mu(\mathcal{L})
< \mu(\wedge^{k}V)=0 \eeq for all $k$ with $0<k < n$. Since
line bundles are classified by their first Chern class on a Riemannian
manifold, this is a dramatic simplification of the problem of
stability. Rather than the untenable problem of considering all
sub-sheaves, we have only to analyze and constrain the well-defined
set of line bundle sub-sheaves of $\wedge^{k} V$.
\subsubsection{Constraints on line bundle sub-sheaves}\label{constl}
What constraints can we place on the line bundles which must be
considered in examining the stability of an $SU(n)$ bundle $V$?  Using
the results of the previous subsection, we begin by considering a line
bundle sub-sheaf $\mathcal{L}$ of $\wedge^{k}V$.  We present several
simple characteristics that distinguish line bundle sub-sheaves of
stable $SU(n)$ bundles.

First, as discussed in \eref{hom}, by definition, if $\mathcal{L}
\subset \wedge^{k}V$ then
\beq
 \Hom_{X}(\mathcal{L}, \wedge^{k}V) \neq 0~.
 \eeq
 Therefore, we have a non-trivial cohomology condition to
check for any candidate line bundle sub-sheaf\footnote{Note
  that $\Hom_{X}(\mathcal{L}, \wedge^{k}(V)) \neq 0$ implies that
  $\mathcal{L}$ is a line bundle sub-sheaf rather than a sub-line
  bundle of $V$. This follows from the fact that while injective maps
  exist, the image of $\mathcal{L}$ in $V$ may not be a
  bundle. Equivalently, it is possible that $V/\mathcal{L}$ is not
  always a bundle \cite{dan}.} of $V$. Note that in this section, we will consider the mapping of $\cL \hookrightarrow \wedge^{k}V$ for \textit{generic} values of the bundle moduli of $V$.
  
The second observation is that for $SU(n)$ bundles, if $V$ is stable
then $H^0(X,V)=H^0(X,V^*)=0$. Indeed, if $H^0(X,V)$ were non-vanishing, then it is clear that $\Hom_X (\cO,V)\cong H^0(X,\cO^*\otimes V)=H^0(X,V)\neq 0$ and, hence, that the trivial sheaf $\cO$ would de-stabilize $V$ for any choice of K\"ahler
moduli. A similar argument holds for $V^*$ which is stable exactly if $V$ is. For this reason, checking that $H^0(X,V)=H^0(X,V^*)=0$ for an $SU(n)$ bundle $V$ is a useful first test for stability which we can carry out before proceeding further. Assuming this has been verified, it is clear that all possible de-stabilizing line bundle sub-sheaves, $\cL \subset V$ (or $\cL\subset V^*$), must satisfy $H^0(X,\cL)=0$. Furthermore, if an $SU(n)$ bundle is stable then its anti-symmetric tensor powers, $\wedge^{k}V$, are at least semi-stable \cite{Friedmann, Kobayashi}. As a result, by scanning for possible line bundle sub-sheaves of $\wedge^{k}V$ for all values of $k$, we can definitively determine the region of stability of the $SU(n)$ bundle $V$. If we discover that for a fixed polarization, $\wedge^{k}V$ is destabilized by a line bundle $\cL$, then by the observations above, we know that $V$ itself is unstable for this K\"ahler form due to a rank $k$ sub-sheaf $\cF \subset V$.
 
To summarize, the method of analyzing the stability of an $SU(n)$
bundle at any given point in K\"ahler moduli space proceeds as
follows.
\vspace{0.2cm}

\begin{itemize}
\item {\bf Check that $H^0(X,V)=H^0(X,V^*)=0$.}

Should this not be the case, the bundle is unstable everywhere in K\"ahler cone and we can stop.

\item {\bf Consider all possible line bundles ${\bf \cal L}$, as
    classified by their first Chern class.}

  The results of the previous subsection assure us that we need only
  consider line bundles rather than all sheaves of rank $k<n$.

\item {\bf Discard all line bundles, $\cL$, for which ${\bf Hom(\cL, \wedge^{k} V)
    =0}$ for all ${\bf k<n}$.}

If $Hom(\cL, \wedge^{k} V)
    =0$, such a line bundle is not a sub sheaf of $\wedge^{k} V$ for any
$k<n$ and thus need not be considered\footnote{As an additional simplifying technique, we note that while scanning for possible line
bundle sub-sheaves of $\wedge^{k}V$, if it is true that $H^{0}(X,
\wedge^{k}V)=0~ \forall~k$, we can eliminate any line bundles for
which $H^0(X, \cL) \neq 0$.}. As a simplification, for $k=1,n-1$, we can discard all line bundles with $H^0(X,\cL)\neq 0$. Indeed, since we have already verified that $H^0(X,V)=H^0(X,V^*)=0$, such line bundles cannot inject into $V$ and $\wedge^{n-1}V\simeq V^*$.

\item {\bf Check the slope of the remaining line bundles}

  We must check the slope $\mu({\cal L})$ of the remaining line
  bundles at the point in K\"ahler moduli space we are considering. If
  $\mu({\cal L}) \geq \mu(V) =0$, then ${\cal L}$ destabilizes $V$ at that
  point in moduli space. If no such line bundle exists, then $V$ is slope-stable at this point in K\"ahler moduli space.
\end{itemize}
\section{Stability Walls in the K\"ahler Cone} \label{bundlestab}

The stability condition \eref{s-slope} clearly depends on the choice of K\"ahler
parameters and thus a bundle need not be stable throughout its
entire K\"ahler cone. Furthermore, the choice of bundle moduli can affect which
potentially de-stabilizing sub-sheaves inject into $V$. In principle then, ``walls" between regions of
stability/instability such as those depicted in the (dual) K\"ahler cone in Figure \ref{f:stab}
can occur. In the neighborhood of such stability walls, the supersymmetric
structure of the low energy effective theory must be studied in more
detail than in the stable region. We begin by exploring the structure of stability walls 
in K\"ahler moduli space.

While this discussion can be applied to a K\"ahler cone of any size, to illustrate this concept, we will consider a two-dimensional K\"ahler cone (that is, $h^{1,1}(X)=2$) given by the positive quadrant in the $(s_1,s_2)$ plane of dual K\"ahler moduli. Suppose that $V$ is an $SU(n)$ bundle $V$ and that a stability wall\footnote{In general, for an $h^{1,1}(X)$-dimensional K\"ahler moduli space, the stability wall will be a $(h^{1,1}(X)-1)$-dimensional hyperplane.} of the
form shown in Figure \ref{f:stab} is generated by a de-stabilizing
sub-sheaf $\cF \subset V$ with $c_1(\cF)=-kJ_1+mJ_2$, where $k>0$ and $m>0$~\footnote{Note that for an $h^{1,1}(X)$-dimensional, positive K\"ahler cone, if $\cF$ is to define a stability wall, $c_1^i(\cF)$ must contain at least one negative and one positive component. For this reason, a bundle defined on a manifold with $h^{1,1}(X)=1$ is stable everywhere or nowhere.}.  
\begin{figure}[!ht]
  \centerline{\epsfxsize=5in\epsfbox{stab.eps}}
  \mycaption{ A two-dimensional dual K\"ahler cone, defined by $s_1 \geq 0$ and $s_2 \geq 0$, where $s_{i}=d_{ijk}t^{j}t^{k}$. Shown are two de-stabilizing sub-sheaves $\cF_1$ and $\cF_2$ with first Chern classes given by $c_1(\cF_1)=(-k,m)$ and $c_1(\cF_2)=(p,-q)$ for some integers $k,m,p,q$. The bundle $V$ is stable between the lines with slopes $k/m$ and $p/q$.}
\label{f:stab}
\end{figure}
From Eq.~\eqref{s-slope}, the slope of such a sub-sheaf is given by
\beq \mu(\cF)=\frac{1}{\rk({\cal F})} c_1^i(\cF)
s_i=\frac{1}{\rk({\cal F})} (-ks_{1}+ms_{2})~ . \label{2slope} \eeq
This means that, for all K\"ahler parameters $(s_1,s_2)$ where
$\mu(\cF)>0$, that is, for $s_{2}/s_{1}>k/m$, the bundle is unstable
while for $\mu(\cF)<0$, or $s_{2}/s_{1}<k/m$, it is potentially stable,
subject, of course, to other possible destabilizing sub-sheaves. For
example, in addition, there may exist a sub-sheaf with first Chern
class given by $c_1(\cF)=pJ_1-qJ_2$, where $p>0$ and $q>0$, which
would yield a lower boundary line with slope $p/q$. If these two
sub-sheaves are the ``maximally destabilizing'' ones on either side of
the K\"ahler cone, then the bundle is supersymmetric for all values
$p/q<s_{2}/s_{1}<k/m$. For a two-dimensional K\"ahler cone, the
supersymmetric region of a general bundle will be defined by these
upper and lower boundaries as illustrated in Figure \ref{f:stab}. For
the present discussion, we will focus our attention on the theory near
one of these boundary lines.

What happens on the line with slope $k/m$ itself? There, the bundle is
manifestly semi-stable since $\mu(\cF) =\mu(V)=0$. However, to decide
whether the low energy theory is supersymmetric or not, we must
consider not only our position in K\"ahler moduli space, but in bundle
moduli space as well.  If we examine this line in K\"ahler moduli
space while remaining at an arbitrary point in bundle moduli space for
which $V$ is an \emph{indecomposable} rank $n$ bundle, then
supersymmetry will be broken. This must be the case since
supersymmetric vacua exist if and only if the bundle is poly-stable. A
semi-stable bundle can only be poly-stable if it is a direct sum of
stable bundles. Therefore, the stability wall in K\"ahler moduli space
will only correspond to a supersymmetric solution if the bundle
decomposes into a direct sum $V \to \cF \oplus \cal{K}$ where
$\rk(\cF)+\rk({\cal{K}})= \rk(V)$ and $c_{1}(\cF)=
-c_{1}(\cal{K})$. Such bundle decompositions near a wall of
semi-stability were discussed for $K3$ manifolds in
Ref.~\cite{Sharpe:1998zu}.

At this special ``decomposable" locus in bundle moduli space, the
bundle is split and poly-stable. While the topological quantities of
$V$ remain the same at this locus, other important features of the
bundle and the corresponding low energy theory can change. For
instance, at this decomposable locus, the structure group of an $SU(n)$
bundle will become $S(U(n_1) \times U(n_2))$ with $n_{1}=\rk(\cF)$ and
$n_{2}=\rk(\cal{K})$. As we shall discuss in detail in the next
section, this change in the structure group of $V$ will also alter the
visible gauge symmetry of the four-dimensional theory. For instance,
if $\rk(V)=3$, then the commutant of $S(U(2) \times U(1))$ in $E_8$ is
no longer $E_6$, but is enhanced to $E_{6} \times U(1)$.

Before one can study such supersymmetric theories further, it is
prudent to ask whether such a decomposable point exists in the moduli
space of $V$. Fortunately, it can be shown that if there exists a
sub-sheaf $\cF$ of $V$ which injects into $V$, then there will always
exist a locus in the moduli space of $V$ for which $V$ decomposes as a
direct sum $\cF \oplus V/\cF$. If we define the relationship between
$\cF$ and $V$ via an `extension' short exact sequence 
\beq\label{extseq} 0 \to \cF\to V \to V/\cF \to 0~, 
\eeq then it is well-known that the space
of non-isomorphic extensions is given by ${\rm Ext}^{1}(V/\cF,\cF)$
\cite{AG}. Furthermore, the zero-element of the Ext group corresponds
to the decomposable locus $V=\cF \oplus {\cal K}$, where ${\cal
  K}=V/{\cF}$. If an indecomposable sub-sheaf $\cF$ defines a stability wall in moduli space, then, by definition, it must be stable\footnote{It is possible that $\cF$ could be a direct sum of stable objects with the same slope. In this case, we would simply obtain a further decomposition for $V$, that is, $V \sim \cF_1\oplus \cF_2 \oplus \ldots$.}. Simply dualizing the short exact sequence \eref{extseq}, $0 \to (V/\cF)^* \to V^* \to \ldots$ we see that $(V/\cF)^*$ must also be stable, since otherwise, it would destabilize $V$ in the region with $\mu(\cF)<0$, counter to our assumption of a stability wall. Thus, on a boundary, both $\cF$ and $V/\cF$ are stable sheaves.
 
 To see that the decomposition, $V=\cF \oplus V/\cF$ is given as a sum of stable bundles, rather than just stable torsion-free coherent sheaves, we must consider a Harder-Narasimhan filtration of $V$ \cite{Huybrechts}. More specifically, since $V$ is decomposable and semi-stable, we will consider a Jordan-H\"older filtration. Points on a moduli space of strictly semi-stable sheaves do not correspond to unique objects, rather they represent an ``S-Equivalence class"  \cite{Friedmann, Huybrechts, bradlow}. Two bundles are S-equivalent if their Jordan-H\"older graded sums ${\rm Gr}(V)=\cF_1\oplus V/\cF_1\oplus \ldots$ are isomorphic. For any S-equivalence class, there is a unique poly-stable representative up to isomorphism. That is, there is a unique graded sum in which the summands are stable bundles (i.e. a Seshadri Filtration, see \cite{Huybrechts, bradlow}) and, as a result, we can consider the decomposition of $V$ into this sum. 
 
 Thus, for bundles with non-trivial stable/unstable regions in
 K\"ahler moduli space, there will \emph{always} exist a locus in the
 moduli space of $V$ for which $V$ holomorphically decomposes as a
 direct sum of poly-stable bundles on the stability wall. We will now
 present a simple example of a Calabi-Yau manifold $X$ and a bundle
 $V$ which exhibits a stability wall.

\subsection{A stability wall example} \label{simpleeg}

Up to this point, our entire discussion has been completely general.
Let us now exemplify our previous comments by considering a bundle
defined on the complete intersection Calabi-Yau manifold~\cite{hubsch},
\begin{equation}
X= \left[\begin{array}[c]{c}\mathbb{P}^1\\\mathbb{P}^3\end{array}
\left|\begin{array}[c]{ccc}2 \\4
\end{array}
\right.  \right]\; ,
 \label{cicy24}
\end{equation}
defined by a polynomial of bi-degree $(2,4)$ in the ambient space $\mathbb{P}^1\times\mathbb{P}^3$. 
This manifold has two K\"ahler moduli, so $h^{1,1}(X)=2$. A basis of harmonic $(1,1)$ forms is given by the K\"ahler forms $J_1$ and $J_2$ of the ambient projective spaces $\mathbb{P}^1$ and $\mathbb{P}^3$ (pulled back to $X$). We denote the corresponding K\"ahler moduli by $t^1$ and $t^2$. The K\"ahler cone is the positive quadrant $t^1\geq 0$ and $t^2\geq 0$ and the non-zero triple intersection numbers are given by $d_{122}=4$ and $d_{222}=2$. From Eq.~\eqref{sdef}, we can calculate the dual K\"ahler moduli $s_1$ and $s_2$ and we find
\begin{equation}
 s_1=4(t^2)^2\; ,\quad s_2=8t^1t^2+2(t^2)^2\; . \label{sdefex}
\end{equation} 
Hence, expressed in terms of these dual K\"ahler moduli, the K\"ahler cone is the positive quadrant above the line $s_2/s_1=1/2$. Line bundles on $X$ are characterised by two integers, $k$ and $l$, and are denoted by $\cO_X(k,l)$. Their first Chern class is given by $c_1(\cO_X(k,l))=kJ_1+lJ_2$. 

We will define a rank $3$ monad
bundle \cite{Anderson:2007nc,Anderson:2008uw,Anderson:2008ex} on this
space by the short exact sequence \beq\label{monadeg} 0\to V \to
\cO_{X}(1,0)\oplus\cO_{X}(1,-1)\oplus\cO_{X}(0,1)^{\oplus 2}
\stackrel{f}{\longrightarrow} \cO_{X}(2,1) \to 0~.  \eeq The bundle $V$ is
defined as the kernel of the map $f$. This map is derived from
polynomials of bi-degree $((1,1),(1,2),(2,0),(2,0))$ (mapping sections
of $\cO_{X}(1,0)\oplus\cO_{X}(1,-1)\oplus\cO_{X}(0,1)^{\oplus 2}$ to sections of $
\cO_{X}(2,1)$). The rank of $V$ is three and $c_1(V)=0$ so that the structure group is generically $SU(3)$.
At a generic point in moduli space, the only non-vanishing
cohomology of this $SU(3)$ bundle is $h^1(X,V)=2$. This means there are two families in ${\bf 27}$ multiplets and no anti-families in $\overline{\bf 27}$. The moduli space of $V$ has dimension
\begin{equation}
h^1(X,V \otimes V^*)=22 \label{bundlemod},
\end{equation}
 so that we have $22$ $E_6$ singlet fields which should be interpreted as bundle moduli~\footnote{See \cite{Anderson:2008uw,yukawa} for general formulae for the spectra and moduli of monad bundles.}. We select the bundle~\eref{monadeg} solely because
it provides a straightforward example of a bundle with both
supersymmetric and non-supersymmetric regions in its moduli space,
and make no attempt here to consider models with fully realistic
particle spectra.

To analyze the stability of this rank three bundle we must consider
the potentially de-stabilizing rank one and two sub-sheaves. As discussed
in the previous subsections, this may be done more simply by
considering potentially de-stabilizing line bundle sub-sheaves of $V$
and $\wedge^2 V\cong V^*$.

Beginning with rank one sub-sheaves, we consider all sub-line bundles
of $V$. One can verify that $H^0(X,V)=0$ for the
bundle~\eqref{monadeg} and that all line bundles $\cO_X(k,l)$, where
$k,l\geq 0$ have sections. Hence, such semi-positive line bundles need
not be considered. Further, semi-negative line bundles $\cO_X(k,l)$,
where $k,l\leq0$ always have a negative slope in the interior of the
K\"ahler cone and are irrelevant.  It is, therefore, clear that the
only line bundles we need to consider are those with `mixed'
positive/negative entries in their first Chern classes. That is, $\cL$
is given by $\cO_X(-k,m)$ or $\cO_X(p,-q)$ for $k,m,p,q>0$. We seek
such line bundles for which $\Hom_X(\cL, V) \neq 0$. A straightforward
but lengthy analysis (see \cite{Anderson:2008ex, stability_paper} for
details) yields that if $\cL=\cO_X(-k,m)$ then $\Hom_X(\cL,V) \neq 0$
for $k \geq 3$ and $m=1$. Further, $\cO_X(p,-q)$ does not inject for
any values of $p,q$. Hence, the ``maximally destabilizing'' rank one
sub-sheaf corresponds to the line bundle $\cL_1=\cO_X(-3,1)$ and we
have the short exact sequence \beq 0 \to \cL_{1} \to V \to V/\cL_{1}
\to 0\; .  \eeq This implies that above a line with slope $s_2/s_1=3$
in the K\"ahler cone, the bundle is definitely unstable while it may be
stable below this line.

However, we still need to consider rank two destabilizing sub-sheaves or, equivalently, rank one line bundle sub-sheaves of $\wedge^{2}V$. As before, we find that no lower boundary exists, that is, $\Hom_X (O(p,-q), \wedge^{2}V)=0$ for all
values of $p,q>0$. For the upper boundary, we consider sub-bundles of
the form $\cL=\cO_X(-k,m)$ in $\wedge^2 V$. Since $V$ is an $SU(n)$
bundle we have $\wedge^2 V \simeq V^*$. This means we can extract information about $\wedge^2V$ from the dual
\beq
0 \to \cO_X(-2,-1) \to \cO_X(-1,0)\oplus \cO_X(-1,1) \oplus \cO_X(0,-1)^{\oplus 2} \to V^* \to 0
\eeq
of the monad sequence~\eqref{monadeg}. Twisting this sequence by $\cL^*=\cO_X(k,-m)$ we get
\beq
 0 \to \cO_X(k-2,-1-m) \to \cO_X(k-1,-m)\oplus \cO_X(k-1,1-m) \oplus \cO_X(k,-1-m)^{\oplus 2}\to \cL^*\otimes V^* \to 0~.
\eeq
One can verify from this sequence that $\Hom_X(\cL,\wedge^2V)\cong H^0(X,\cL^*\otimes V^*)\neq 0$ only for $\cL=\cO(-k,1)$ and $k \geq 1$. Hence, the maximally destabilizing line bundle is $\cL_2=\cO_X(-1,1)$ and we have 
\beq
  0 \to \cL_{2} \to V^* \to V^*/\cL_{2} \to 0\; .
\eeq 
Thus, $V^*$ is stable only below the line with slope
$s_2/s_1 =1$. Equivalently, this implies that there is a rank two sub-sheaf, ${\cal F}$ of $V$ with
$c_{1}({\cal F})=-J_1+J_2$ and
 \beq
  0 \to {\cal F} \to V \to V/{\cal F} \to 0~.
 \eeq
Since the rank two sub-sheaf ${\cal F}$ de-stabilizes a larger region of the
dual K\"ahler cone then the rank one sub-sheaf $\cL_{1}$, the existence
of $\cL_{1}=\cO_{X}(-3,1)$ is irrelevant here. While there are an infinite
number of sub-sheaves that de-stabilize some portion of the K\"ahler
cone, for this bundle there is only one relevant stability wall
which is determined by the rank two sub-sheaf ${\cal F} \subset V$. 
From Eq.~\eqref{2slope}, the slope of this sub-sheaf is given by
\begin{equation}
 \mu(\cF)=\frac{1}{2}(-s_1+s_2)\; , \label{slopeex}
\end{equation}
and it follows that $V$ is stable below the line with slope $s_2/s_1=1$. 
The dual K\"ahler cone, together with the region of stability, are plotted in
Figure \ref{f:example7887}.
\begin{figure}[!ht]
\centerline{\epsfxsize=5in\epsfbox{example2_7887.eps}}
\mycaption{ The dual K\"ahler cone and the regions of
    stability/instability for the monad bundle described in Section
    \ref{simpleeg}. Here $L_1$ and $L_2$ are line bundle sub-sheaves of $V$ and $\wedge^2(V)$ respectively. The boundaries of the dual K\"ahler cone are denoted by the $s_2$-axis and the line with slope $1/2$.}
\label{f:example7887}
\end{figure}

The discussion of the last two sections is rather mathematical in
nature. It would be desirable to have more physical insight into what
is going on, and to be able to describe stability walls in the K\"ahler
cone in terms of the four-dimensional effective action. To this end,
in the next section, we will study the effective four-dimensional
theory describing fluctuations about the stability wall, that is, the
locus in moduli space where the bundle structure group decomposes. We
will then use these results, and the ones of the present section, to
discuss what happens physically as one crosses a wall of stability.

\section{Effective Field Theory at the Decomposable Locus}
\label{eft}

In this section, we will compute the potential in the four-dimensional
effective theory near the locus in bundle moduli space where the
sequence $ 0 \to {\cal F} \to V \to {\cal K} \to 0 $ becomes the
trivial extension, that is, where the bundle decomposes as $V = {\cal
  F} \oplus {\cal K}$. To perform such a computation, the first thing
we need to know is the low energy spectrum. We will describe this in
two stages; first presenting those fields which descend from
ten-dimensional gauge fields before continuing to describe those which
arise from other sources.

\subsection{Four-dimensional spectrum from the gauge
  sector}\label{4dspec1}

As stated in the previous section, it is clear from the sequence $ 0
\to {\cal F} \to V \to {\cal K} \to 0 $, and the fact that $c_1(V)$
vanishes, that ${\cal F}$ and ${\cal K}$ have equal and opposite first
Chern class. Thus, at the decomposable point in bundle moduli space
the structure group of $V$ is $S (U(n_1) \times U(n_2) )$, where $n_1+n_2=n$. It will turn
out that this is not the most convenient way in which to express this
group for what follows, in particular for the calculation of the
spectrum. As such, we will now carry out a little bit of group theory
in order to obtain a more suitable form.

Locally, at the level of Lie algebras, $S(U(n_1) \times U(n_2))$ is
equivalent to $SU(n_1) \times SU(n_2) \times U(1)$. Elements of the
former group are defined by a pair $(A,B)$, where $A$ and $B$ are $n_1
\times n_1$ and $n_2 \times n_2$ unitary matrices respectively,
satisfying the condition $\det{A} \det{B}=1$. Elements of $SU(n_1)
\times SU(n_2) \times U(1)$ are defined by a triplet $({\cal A}, {\cal
  B}, {\cal E})$, where ${\cal A}$ and ${\cal B}$ are $n_1 \times n_1$
and $n_2 \times n_2$ special unitary matrices respectively and ${\cal
  E}$ is the $U(1)$ phase. We may define a map $\sigma: SU(n_1)\times
SU(n_2)\times U(1)\rightarrow S(U(n_1)\times U(n_2))$ by $({\cal
  A},{\cal B},{\cal E})\rightarrow (A,B)= ({\cal E}^{n_2} {\cal
  A},({\cal E}^*)^{n_1} {\cal B})$ and it is easy to verify that this
map is onto and that ${\rm Ker}(\sigma
)\cong\mathbb{Z}_{n_1n_2}$. Hence, globally $S(U(n_1)\times
U(n_2))\cong (SU(n_1)\times SU(n_2)\times U(1))/\mathbb{Z}_{n_1n_2}$.
To understand the matter content of the low energy heterotic theory we
must consider the branching of the adjoint of $E_8$ under the bundle
structure group and its commutant. In the standard texts
\cite{Slansky:1981yr}, these branchings are given in terms of $SU(n_1)
\times SU(n_2) \times U(1)$ rather than $S(U(n_1) \times U(n_2))$
which is why we have discussed the relation between those two groups.

For the sake of brevity we will only consider one possible structure
group in the main text of this paper. We shall detail in full the case
$SU(3)\rightarrow S(U(2) \times U(1))$ and note that all other $SU(n)$ decompositions
follow in an entirely analogous manner\footnote{For example, there are two possible
  decompositions for an $SU(3)$ bundle. First, we have $SU(3)
  \rightarrow S(U(2)\times U(1))\approx SU(2)\times U(1)$, corresponding to $V \rightarrow \cF
  \oplus \cal{K}$, a sum of a rank two and a rank one bundle. There is
  a second possibility, namely $SU(3) \rightarrow S(U(1) \times U(1) \times
  U(1)) \approx U(1) \times U(1)$, corresponding to a decomposition
  into three line bundles: $V \rightarrow \cL_1 \oplus \cL_2 \oplus
  \cL_3$. In this latter case, one would find two additional low energy
  $U(1)$ symmetries. In the interests of brevity, we will only detail the case
  of a single $U(1)$ here.}. We consider, then, the case where we have an $SU(3)$ structure group
at a generic point in moduli space, degenerating to $SU(2) \times
U(1)$ at the stability wall about which we construct our low
energy theory. This gives us a low energy gauge group $E_6
\times U(1)$ at this locus. Under the decomposition $E_8
\supset E_6 \times SU(2) \times U(1)$ the adjoint of $E_8$ decomposes
as follows.
\begin{eqnarray}
\label{decomp1}
{\bf 248} &=& ({\bf 1},{\bf 1})_0 + ({\bf 1},{\bf 2})_{-3} +({\bf 1},{\bf 2})_{3} +({\bf 1},{\bf 3})_0 + ({\bf 78},{\bf 1})_0 \\ \nonumber
&&+({\bf 27},{\bf 1})_{2}+({\bf 27},{\bf 2})_{-1}+(\overline{\bf 27},{\bf 1})_{-2} + (\overline{\bf 27},{\bf 2})_{1}
\end{eqnarray}
In the above decomposition, the first number in the bracket is the
$E_6$ representation, the second number is that of a $SU(2)$
representation and the subscript is the $U(1)$ charge. We note that
our sign conventions differ somewhat from those of
\cite{Slansky:1981yr}.

The field content of the low energy theory is determined by the first
and zeroth cohomologies of various combinations of ${\cal F}$ and ${\cal K}$ as determined by the decomposition \eqref{decomp1}. The
first cohomologies tell us about scalars and the zeroth about gauge
bosons in the four-dimensional effective theory.  We must remember that
the groups $SU(2)$ and $U(1)$ in the above branching are not directly the structure groups
of $\cF$ and $\cK$ in the decomposition $V=\cF\oplus \cK$. Rather, since $\cF$ is a rank two
bundle with non-vanishing first Chern class its structure group is $U(2)$.  
Further, the structure group of $\mathcal{K}$ is $U(1)$ with
the additional constraint that $c_{1}(\cF) + c_1(\mathcal{K})=0$
\footnote{Note that we have assumed here that it is the rank 2
  sub-sheaf which injects everywhere and destabilizes $V$. The same
  analysis can be repeated assuming that it is the rank 1 sub-sheaf
  which is destabilizing without changing the result. This is because
  the only information which will enter the considerations of this
  subsection is the nature of the bundle {\it at the decomposable
    locus in bundle moduli space}.}, so that the overall structure group of $\cF\oplus \cK$ is $S(U(2)\times U(1))$.
 The proceeding group theory discussion tells us that the elements of this structure group
 are given by $({\cal E A},({\cal E}^*)^2)$ where $({\cal A},{\cal E})\in SU(2)\times U(1)$. 
 We have summarised the information about the various representations and cohomologies, associated to low-energy chiral multiplets, in Table~\ref {table1}. Note that the charges given as a subscript in the first column
refer to the $U(1)\subset SU(2)\times U(1)$ while the charges in the last column refer to the $U(1)$ in the commutant of $S(U(2) \times U(1))$ in $E_8$.
\begin{table}
\begin{center}
\begin{tabular}{|c|c|c|}
\hline
Representation &Cohomology& Physical $U(1)$ charge\\ \hline
$({\bf 1},{\bf 2})_{-3}$ &  $H^1(X, {\cal F} \otimes {\cal K}^*)$ & $-3/2$ \\ \hline
$({\bf 1},{\bf 2})_{3}$ & $ H^1(X, {\cal F}^* \otimes {\cal K})$ & $3/2$\\ \hline
$({\bf 1},{\bf 3})_0$ &  $H^1(X, {\cal F} \otimes {\cal F}^*)$ & $0$ \\ \hline
$({\bf 27},{\bf 1})_{2}$ & $H^1(X, {\cal K})$ & $1$ \\ \hline
$({\bf 27},{\bf 2})_{-1}$ & $H^1(X, {\cal F})$ & $-1/2$\\ \hline
$(\overline{\bf 27},{\bf 1})_{-2}$ & $H^1(X, {\cal K}^*)$ & $-1$ \\ \hline
$(\overline{\bf 27},{\bf 2})_{1}$ & $H^1(X, {\cal F}^*)$ & $1/2$\\ \hline
\end{tabular}
\mycaption{Representations, cohomologies and $U(1)$ charges associated to the zero modes which arise at the stability wall. The first column gives the representation under $E_6\times SU(2)\times U(1)$, the second column is the relevant cohomology involving $\cF$ and $\cK$, and the last column is the charge of the states under the $U(1)$ which is in the commutant of $S(U(2) \times U(1))$ in $E_8$.}
\label{table1}
\end{center}
\end{table}
Let us interpret the fields that appear here carefully. The $({\bf
  27},{\bf 1})_{2}$, $(\overline{\bf 27},{\bf 1})_{-2}$, $({\bf
  27},{\bf 2})_{-1}$ and $(\overline{\bf 27},{\bf 2})_{1}$ multiplets
unambiguously represent matter fields while the $({\bf 1},{\bf 3})_0$
multiplet clearly corresponds to moduli of the $S(U(2)\times U(1))$
bundle. The remaining two cohomologies, however, are a little bit more
subtle to interpret. From the point of view of the theory at stability
wall - the point of view we are considering here - these fields are
charged under a visible sector gauge group (the enhanced $U(1)$) and,
hence, they are matter fields. However, it would also not be
unreasonable to regard them as bundle moduli.  In general, we think of
the cohomology $H^1(X,V \otimes V^*)$ as representing bundle
moduli~\footnote{More precisely, as a vector space, $H^1(X, V \otimes
  V^*)$ can be viewed as the tangent space to bundle moduli
  space.}. At the stability wall, where the bundle $V$ decomposes as $V
= {\cal F} \oplus {\cal K}$, this bundle cohomology splits into various
parts as \bea\label{badger_moduli} H^1(X, V \otimes V^*) = H^1(X,
{\cal F}\otimes {\cal F}^*) \oplus H^1(X, {\cal F}^* \otimes {\cal K})
\oplus H^1(X, {\cal F} \otimes {\cal K}^*)\; .  \eea Here, we have
used that ${\cal K}$ is a line bundle in the case we are considering
and that $H^1(X, {\cal O})=0$ on a Calabi-Yau manifold. Thus, it is
not unreasonable to interpret $H^1(X, {\cal F}^* \otimes {\cal K})$
and $H^1(X, {\cal F} \otimes {\cal K}^*)$ as giving rise to bundle
moduli. Thinking about the perturbations such degrees of freedom would
contribute to the higher dimensional gauge field, we see that they
describe the deformations of the split bundle where ${\cal F}$ and
${\cal K}$ are mixed into one another; that is, they parametrize
movement in moduli space away from the decomposable locus. However, we
stress that, in this work, the effective field theory that we will
derive will describe perturbations around the decomposable locus, and
hence, we will think of the charged fields in $H^1(X, {\cal F}^*
\otimes {\cal K})$ and $H^1(X, {\cal F} \otimes {\cal K}^*)$ as
matter. In the following, these fields will be denoted by $C^L$.

\subsection{Four dimensional spectrum from the gravitational
  sector}\label{dim_reduction}

In addition to the fields of the previous subsection, we have the
usual low energy moduli from the gravitational sector of the
eleven-dimensional theory. It is important to note that some of these
moduli are also charged under the $U(1)$ symmetry in the low energy
gauge group even though they do not descend from higher-dimensional
gauge fields. The moduli fields which are not associated to the
$E_8\times E_8$ gauge fields include the dilaton, the (complexified)
K\"ahler moduli, the complex structure moduli and possible five-brane
moduli. It turns out that, of these fields, only the complex structure
moduli are not charged under the $U(1)$ symmetry. For now, we will
focus on tree level results where only the K\"ahler moduli are of
importance. The other fields will come into play in Section 6 where we
calculate what, in the weakly coupled language, correspond to one-loop
corrections. For reasons which will become clear, the following
arguments will be carried out using the language of the strongly-coupled
$E_8\times E_8$ heterotic
string~\cite{Witten:1996mz}-\cite{Lukas:1998tt}, that is, M-theory on
the orbifold $S^1/\mathbb{Z}_2$. However, analogous arguments leading
to the same results can be presented starting with the weakly-coupled
ten-dimensional theory~\cite{Green:1987mn}.

In terms of higher-dimensional fields, the K\"ahler moduli $T^i$, where $i,j,k=1,\ldots ,h^{1,1}(X)$ can be written as follows.
\begin{equation}
 T^i=t^i+2i\chi^i \label{Tdef}
\end{equation}
Here $t^i$ are the K\"ahler parameters of the Calabi-Yau manifold,
which we have already encountered in our bundle stability analysis,
and $\chi^i$ are the associated $T$-axions which descend from the
M-theory three-form as
 \bea \label{axdef} C_{11 a \bar{b}} = \chi^i
J_{i a \bar{b}}\; .  
\eea 
We recall that $\{J_i\}$ is a basis of
harmonic $(1,1)$ forms on the Calabi-Yau manifold, chosen to be dual
to a basis $\{{\cal{C}}^i\}$ of the second Calabi-Yau homology such that
\begin{equation}
 \frac{1}{v^{1/3}}\int_{{\cal{C}}^i}J_j=\delta^i_j\; ,
\end{equation} 
where $v$ is an arbitrary coordinate volume of the Calabi-Yau space.
The index $11$ refers to the coordinate of the $S^1/\mathbb{Z}_2$ orbifold, and $a,b,\dots$ and $\bar{a},\bar{b},\dots$ denote holomorphic and anti-holomorphic Calabi-Yau indices. 

It is a well-known fact that anti-symmetric tensor fields in heterotic theories
transform under $E_8\times E_8$ gauge transformations \cite{Green:1987mn}.  Consider
a local infinitesimal gauge transformation, 
\bea \label{gt1} \delta
A_A = - D_A \epsilon \;, 
\eea 
where the derivative is covariant and
$\epsilon$ is the gauge transformation parameter. Under such a change
of gauge the two-form $C_{11AB}$ transforms as 
\bea \label{gchange} \delta C_{11AB} = - \left(\frac{\kappa_{11}}{4 \pi}\right)^{2/3} \frac{1}{4 \pi}
\delta(x^{11}) \textnormal{tr}( \epsilon F_{AB}) \; ,
\eea
where $A,B =0,\ldots,9$ label the coordinates transverse to the $S^1/\mathbb{Z}_2$ orbifold.
Let us concentrate first on the internal components of equation \eqref{gchange} by
writing $\delta C_{11 a \bar{b}} = \; \delta \chi^i J_{i a\bar{b}}$. 
Integration over ${\cal C}^i \times S^1/Z_2$ then leads to the following gauge transformation
\bea \label{chichange}
 \delta \chi^i = - \frac{\epsilon_S \epsilon_R^2}{16\pi} \int_{{\cal C}^i} \textnormal{tr}( \epsilon F )\eea
for the $T$-axions, where we have introduced the dimensionless $\cO (\kappa_{11}^{2/3})$ combination of constants
\begin{equation} \label{combodef}
  \epsilon_S \epsilon_R^2  = \left(\frac{\kappa_{11}}{4 \pi}\right)^{2/3} \frac{8}{ 4 \pi \rho v^{1/3}}
\end{equation} 
and $\pi\rho$ is the coordinate volume of the $S^1/\mathbb{Z}_2$
interval. The constants $\epsilon_S$ and $\epsilon_R$ are the usual
expansion parameters defining four-dimensional heterotic M-theory
\cite{Lukas:1998hk}.  In weakly coupled language, $ \epsilon_s
\epsilon_R^2=8 \pi\alpha '/(4 v_{10})^{1/3}$, where $v_{10}$ is the
Calabi-Yau coordinate volume in the 10-dimensional theory.  Hence, the
effects considered here are order $\alpha'$ but at tree
level. Corrections which are one-loop from a weakly coupled
perspective will be discussed in Section~\ref{higherorder}. Normally,
the transformation~\eqref{chichange} does not lead to a non-trivial
gauge transformation of the $T$-axions under the visible sector gauge
group. This is because if $F$ is nonzero, in order for $\chi$ to have
a non-trivial transformation according to \eqref{chichange}, then $F$
breaks the associated gauge symmetry at the compactification scale and
so it does not appear as a factor in the visible sector gauge
group. However, in our case we have a $U(1)$ factor in the structure
group of our bundle which, due to its self commutation, is both
visible and hidden at the same time. In particular, $F$ can have a
non-trivial vev in the $U(1)$ direction without breaking the
associated visible sector gauge symmetry.

Thus, for our case, we have a non-trivial $U(1)$ transformation for the
moduli $T^k$. If we consider a gauge transformation associated with the
additional $U(1)$ seen in the visible sector, with a gauge parameter denoted by $\tilde{\epsilon}$,
we may rewrite \eqref{chichange} in terms of the first Chern $c_1({\cal F})$ of the de-stabilizing sub-sheaf ${\cal F}$ as follows.
\bea
\label{chiX}
\delta \chi^i = - \frac{3}{16} \epsilon_S \epsilon_R^2 \tilde{\epsilon}\; c_1^i({\cal F})
\eea
In addition, the singlet matter fields $C^L$ carry a $U(1)$ charge $Q^L$ and transform linearly as
\begin{equation}
 \delta C^L= -i\tilde{\epsilon} Q^L C^L \; . \label{CX}
\end{equation}
It is known~\cite{Dine:1986zy,Dine:1987xk,Distler:1987ee} that a low-energy $U(1)$ symmetry in heterotic compactifications which arises as the commutant of a $U(1)$ factor in the internal bundle structure group is generally anomalous in the Green-Schwarz sense. In this case, the triangle anomalies in the four-dimensional theory are cancelled by an anomalous variation of the gauge kinetic function, as usual. Perhaps not so well-known, but explained in detail in Ref.~\cite{Lukas:1999nh}, is that this includes an anomalous variation of the $T$-modulus dependent threshold corrections of the gauge kinetic function which transforms under~\eqref{chiX}. We stress that this is different from the perhaps better known ``universal'' anomalous $U(1)$ where the triangle anomaly is cancelled by a variation of the dilaton only~\cite{Dine:1986zy,Distler:1987ee}. This universal anomaly arises when the anomalous $U(1)$ symmetry has no internal counterpart in the bundle structure group. Such a situation can arise in the $SO(32)$ heterotic string but not in smooth compactifications of the $E_8\times E_8$ theory. Hence, in the present context we are always dealing with a ``non-universal'' anomalous $U(1)$ symmetry which transforms the $T$-moduli as in Eq.~\eqref{chiX}. In general, amomalous $U(1)$ symmetries are associated to FI terms in the four-dimensional theory. While the universal heterotic $U(1)$ implies the well-known dilaton-dependent FI term~\cite{Dine:1986zy,Distler:1987ee}, the present non-universal case leads to a $T$-dependent FI term~\cite{Lukas:1999nh} at leading order. We will now derive this FI term explicitly.

\subsection{The four dimensional potential}

The potential of an ${\cal N}=1$ supersymmetric theory contains two
types of contribution: those from D and F-terms. As we will see later, F-terms
are less relevant in our context, so we focus on D-terms and their associated potential. 
To do this, we need to know the K\"ahler potential of the fields involved. Having deferred loop corrections to Chapter~\ref{higherorder}, we concentrate here on the leading order which only involves the $T$-moduli $T^i=t^i+2i\chi^i$ and the singlet matter fields $C^L$. We will work in the usual approximation keeping only leading terms in $C^L$ and in inverse powers of the $T$-moduli. The usual K\"ahler potential for the $T$-moduli is given by
\begin{equation}
 \kappa_4^2 K_T=-\ln{{\cal V}}\; ,  \quad {\cal V}=\frac{1}{6}{\cal K}\; , \label{KT}
\end{equation}
where ${\cal V}$ is the Calabi-Yau volume and ${\cal K}$ is the cubic polynomial
\begin{equation}
 {\cal K}=d_{ijk}t^it^jt^k=\frac{1}{8}d_{ijk}(T^i+\bar{T}^i) (T^j+\bar{T}^j)(T^k+\bar{T}^k)\; .   \label{k3}
\end{equation}
The $C^L$ part of the K\"ahler potential has the form
\begin{eqnarray}\label{mKpot}
  K_{\rm matter} = G_{LM} C^L \bar{C}^{M}\; .
\end{eqnarray}
Here $G_{LM}$ is the matter field space metric, which depends on
the various moduli in the theory.  The precise form of this metric
will not be needed but it will be important that it is positive
definite.

\vspace{0.1cm}

We now have all of the information we require to compute the D-term
contribution to the four dimensional theory's potential.  Given that we have identified the
transformation properties of our low-energy fields, in particular under the $U(1)$ symmetry, 
this derivation is standard and can, in a somewhat different context, be found in the literature (see for example, \cite{Lukas:1999nh,Blumenhagen:2005ga}).
Nevertheless, we will carry this out explicitly, to present a complete and coherent argument.
According to the usual structure of four-dimensional ${\cal N}=1$ supergravity, the D-terms are determined by the following equations~\cite{Wess:1992cp}.
\begin{eqnarray} \label{deqns}
  g_{I \bar{J}} \bar{X}^{\bar{J} \eta} &=& i \frac{\partial}{\partial M^I} D^{\eta} \\
  g_{I \bar{J}} X^{I \eta}  &=& -i \frac{\partial}{\partial \bar{M}^{\bar{J}}} D^{\eta}
\end{eqnarray}
Here, the $M^I$ represent all of the fields in the theory, $g_{I \bar{J}}$
the complete field space metric, and $\eta$ is an index labeling the
adjoint of the gauge group. The quantities $X$ are the holomorphic
Killing vectors which generate those analytic isometries of the
K\"ahler field space which can be gauged. Under such a gauge transformation,
the fields $M^I$ then transform as
\begin{equation}
 \delta M^I=- \epsilon^\eta X^{I\eta}\; ,
\end{equation}
where $\epsilon^\eta$ are the gauge parameters. We can now determine the Killing vector for the $U(1)$ symmetry
by comparing this expression with the field transformations~\eqref{chiX} and \eqref{CX} which we have
derived from the higher-dimensional theory. This leads to
\begin{eqnarray}
  X^{i} &=&  i \frac{3}{8} \epsilon_S \epsilon_R^2 c_1^i({\cal F})\\
  X^{L}&=& i Q^{L} C^{L} \; .
 \end{eqnarray}
Inserting this Killing vector into \eqref{deqns} and solving for the associated $U(1)$ D-term we obtain
\bea \label{thedterm}
 D^{U(1)} =  \frac{3}{16} \frac{\epsilon_S \epsilon_R^2}{ \kappa_4^2}\frac{\mu ({\cal F})}{\cal V} - \sum_{L,\bar{M}}Q^{L} G_{L \bar{M}} C^L \bar{C}^{\bar{M}}\; ,
\eea
where $\kappa_4^2=\kappa_{11}^2/(v 2 \pi\rho)$ is the four-dimensional Planck constant.
Here, we have neglected contributions to this D-term from ${\bf 27}$ and $\overline{\bf 27}$ multiplets charged under the $U(1)$ symmetry. As long as $E_6$ remains unbroken these further contributions vanish and, for our explicit example, this will indeed be enforced by the $E_6$ D-terms. We note that the above D-term consists of a FI piece which is proportional to the slope
\begin{equation}
 \mu ({\cal F})=\frac{1}{2}c_1^i({\cal F})s_i=\frac{1}{2}d_{ijk}c_1^i({\cal F})t^jt^k=\frac{1}{8}d_{ijk}c_1^i({\cal F})(T^j+\bar{T}^j)(T^k+\bar{T}^k)\; ,
\end{equation}
of the destabilizing sub-sheaf ${\cal F}$ and a standard matter field piece. We also recall that ${\cal V}$ is the Calabi-Yau volume given in Eq.~\eqref{KT}.

\section{Stability Walls in the Effective Theory}\label{comparison}

In this section, we will study the vacuum structure of the effective
theory derived in the previous section. Our aim is to show how this
four-dimensional, field theory based analysis reproduces features seen
in the mathematical, ten-dimensional analysis of Section
\ref{bundlestab}. In other words, we would like to show how the
abstract mathematical concept of bundle stability and its implications
for supersymmetry can be understood in a physical way, from our
four-dimensional effective theory. It is clear from the
expression~\eqref{thedterm} for the D-term that the nature of the
four-dimensional vacuum space crucially depends on the charges $Q^L$
of the matter field singlets $C^L$. We begin with a general discussion
and then illustrate the main points with the example discussed in
Section \ref{simpleeg}.

We need to understand how the interplay between the FI and matter
field terms in Eq.~\eqref{thedterm} can reproduce the expected pattern
of broken or unbroken supersymmetry. A crucial observation is that the
FI term is proportional, with a positive constant of proportionality,
to the slope, $\mu ({\cal F})$, of the destabilizing sub-sheaf ${\cal
  F}$. We recall from our previous discussion that this slope is
negative in the part of the K\"ahler moduli space where the bundle is
stable and hence supersymmetric, and that it is positive where the bundle
breaks supersymmetry. The stability wall which separates these two
regions in K\"ahler moduli space is defined by $\mu ({\cal
  F})=0$. Given these features of the FI term, one can ask how the
D-term~\eqref{thedterm} for $\mu ({\cal F})<0$ can vanish and hence
preserve supersymmetry as we would expect. To achieve this, the FI term
obviously has to be cancelled by the matter field contribution in
\eqref{thedterm} through a suitable adjustment of the matter field
vevs. This will work precisely if there is {\em at least one
  negatively charged matter field} $C^L$, with $Q^L<0$ present. On the
other hand, if the D-term~\eqref{thedterm} is to be non-zero and thus
break supersymmetry for $\mu ({\cal F})>0$, as we expect it should,
{\em all matter fields need to be negatively charged}; that is, there
should be no matter fields with $Q^L>0$. The
D-term~\eqref{thedterm} then becomes a sum of two positive definite terms
in the non-supersymmetric region and there is no way in which they can
cancel each other.

Hence, for the D-term to correctly describe the expected pattern of
supersymmetry breaking the zero modes at the stability wall are
constrained in a specific way. Let us focus on our main class of
examples, namely bundles $V$ with $SU(3)$ structure group which
decompose as $V=\cF \oplus {\cal K}$, where $\cF$ is the rank two
de-stabilizing sub-sheaf. Then we can indeed show that the required
constraints on the particle spectrum are satisfied. We recall from
Table~\ref{table1} that the singlet matter fields $C^L$ correspond to
the cohomology groups $H^1(X, \cF \otimes {\cal K}^*)$ and
$H^1(X,\cF^* \otimes {\cal K})$, where the former leads to negative
and the latter to positive charge. Then one can show the following
\begin{lemma}\label{lemma_1}
Let $V$ be a holomorphic vector bundle with structure group $SU(3)$ defined over $X$, a Calabi-Yau $3$-fold. If $\cF$ is a rank $2$, stable sub-sheaf of $V$, defining the ``wall"  in the dual K\"ahler cone given by $\mu ({\cal F})=0$, such that $V$ is stable for $\mu(\cF)<0$ and unstable for $\mu(\cF)>0$, then $H^1(X, \cF \otimes (V/\cF)^*) \neq 0$ and $H^1(X, \cF^* \otimes V/\cF)=0$ (for any effective field theory describing only $V$).\footnote{If $H^1(X, \cF^* \otimes V/\cF)\neq 0$ then the bundle defined by the extension  $Ext^1(\cF,V/\cF)=H^1(X, \cF^* \otimes V/\cF)$ is {\it not} isomorphic to $V$. This case corresponds to a branch structure in the effective field theory which provides a transition to a new vector bundle and will be explored in more detail in \cite{Anderson}.}
\end{lemma}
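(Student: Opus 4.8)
The plan is to treat the two assertions separately, writing $\cK = V/\cF$ for the rank-one quotient, so that the lemma's groups are $H^1(X,\cF\otimes(V/\cF)^*)=H^1(X,\cF\otimes\cK^*)$ and $H^1(X,\cF^*\otimes V/\cF)=H^1(X,\cF^*\otimes\cK)$. Since $c_1(V)=0$, the defining sequence $0\to\cF\to V\to\cK\to 0$ forces $c_1(\cK)=-c_1(\cF)$, hence $\mu(\cK)=-2\mu(\cF)$; in particular $\cF$ and the line bundle $\cK$ both have vanishing slope exactly on the wall $\mu(\cF)=0$. I would first record two preliminary vanishings that follow purely from stability on the wall. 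A nonzero map $\cK\to\cF$ is injective and realises $\cK$ as a rank-one sub-sheaf of the stable bundle $\cF$, forcing $\mu(\cK)<\mu(\cF)=0$, which contradicts $\mu(\cK)=0$; and a nonzero map $\cF\to\cK$ has image that is simultaneously a quotient of the stable $\cF$ (slope $>0$) and a sub-sheaf of the stable line bundle $\cK$ (slope $\le 0$), again a contradiction. Thus $\Hom(\cF,\cK)=H^0(X,\cF^*\otimes\cK)=0$ and $\Hom(\cK,\cF)=H^0(X,\cF\otimes\cK^*)=0$, and by Serre duality on the Calabi--Yau threefold (where $K_X\cong\cO_X$) the corresponding $H^3$ groups vanish as well.

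For the first assertion I would identify $H^1(X,\cF\otimes\cK^*)\cong{\rm Ext}^1(\cK,\cF)$ and note that the bundle $V$ itself defines a class in this group. This class cannot be zero: if ${\rm Ext}^1(\cK,\cF)$ vanished, every extension of $\cK$ by $\cF$ would split, so the only bundle with sub-sheaf $\cF$ and quotient $\cK$ would be $\cF\oplus\cK$. But in $\cF\oplus\cK$ the line bundle $\cK$ injects as a sub-sheaf with $\mu(\cK)=-2\mu(\cF)>0=\mu(V)$ throughout the region $\mu(\cF)<0$, so $\cK$ would destabilise $V$ below the wall, contradicting the hypothesis that $V$ is stable there. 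Hence $H^1(X,\cF\otimes\cK^*)\neq 0$.

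For the second assertion the analogous identification is $H^1(X,\cF^*\otimes\cK)\cong{\rm Ext}^1(\cF,\cK)$, which classifies the \emph{opposite} extensions $0\to\cK\to W\to\cF\to 0$. Here I would run the destabilisation argument in reverse: any nonsplit such $W$ contains $\cK$ as a sub-line bundle with $\mu(\cK)>0=\mu(W)$ below the wall, so every bundle arising from a nonzero class in ${\rm Ext}^1(\cF,\cK)$ is unstable in the region $\mu(\cF)<0$ and therefore cannot be isomorphic to $V$. Consequently the states counted by $H^1(X,\cF^*\otimes\cK)$ do not parametrise deformations that preserve $V$; they describe a transition onto the distinct (and, below the wall, unstable) bundle $W$. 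Restricting to the effective theory that describes only $V$ --- i.e.\ remaining on the $V$-branch of bundle moduli space --- these fields are absent, which is the content of $H^1(X,\cF^*\otimes\cK)=0$ as anticipated by the footnote.

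Finally, I would invoke the index theorem both as a consistency check and to link the two statements. Hirzebruch--Riemann--Roch on the threefold, together with the $H^0$ and $H^3$ vanishings established above, gives $h^1(X,\cF\otimes\cK^*)-h^1(X,\cF^*\otimes\cK)=\chi(X,\cF^*\otimes\cK)$, a purely topological quantity; the vanishing of the positively-charged group then fixes the dimension of the negatively-charged one. The genuinely delicate point is the second assertion: in contrast to the first, it is \emph{not} forced by slope-stability alone --- a nonzero ${\rm Ext}^1(\cF,\cK)$ is entirely compatible with the existence of a stable $V$, since the resulting $W$ is simply another bundle sharing the same wall limit --- and the vanishing has to be understood relative to the choice of effective theory. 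I expect this to be the main obstacle, and the place where the hypothesis ``for any effective field theory describing only $V$'' does the real work.
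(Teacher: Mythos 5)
Your proof is correct and follows essentially the same route as the paper: the first assertion is the statement that the extension class of $0\to\cF\to V\to V/\cF\to 0$ is non-zero (the paper phrases this via the long exact sequence obtained by twisting with $\cK^*$, whose connecting homomorphism is exactly your extension-class argument, using $H^0(X,V\otimes\cK^*)=0$ from stability), and for the second assertion both you and the paper observe that a non-zero class in ${\rm Ext}^1(\cF,V/\cF)$ produces a bundle not isomorphic to $V$, so the "vanishing" holds only upon restricting to the effective theory describing $V$. Your closing observation that the second statement is not forced by slope-stability alone, but by the choice of branch, is precisely the caveat the paper records in its footnote and in the final lines of its own proof.
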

\noindent The proof of this lemma (generalized to $SU(n)$ bundles) is provided
in Appendix A. It states that all singlet matter fields $C^L$ result
from the cohomology group $H^1(X, \cF \otimes {\cal K}^*)$ and
therefore, from Table~\ref{table1}, are  negatively charged,
as required. The fact that all of the fields $C^L$ carry $U(1)$
charges of the same sign means, of course, that the $U(1)$ symmetry is
anomalous. This is in line with expectations and we know that this
triangle anomaly is cancelled by the four-dimensional version of the
Green-Schwarz mechanism. Since we are dealing with a non-universal
anomalous $U(1)$, as discussed, this involves an anomalous variation
of the threshold correction to the gauge kinetic function induced by
the transformation~\eqref{chiX} of the $T$-axions. Details of this can
be found in Ref.~\cite{Lukas:1999nh}.

\vskip 0.4cm

We would now like to discuss the D-term~\eqref{thedterm} and its
associated vacuum space and particle masses in more detail. This will
provide us with a general picture of how the theory at the stability
wall relates to the standard heterotic low-energy theory at a generic
point in the supersymmetric part of K\"ahler moduli space. As
mentioned before, we will focus on the part of the moduli space where
$E_6$ is unbroken, so that we do not need to consider vevs of ${\bf
  27}$ and $\overline{\bf 27}$ multiplets. Hence, the fields of
central interest are the $T$-moduli $T^i=t^i+2i\chi^i$ and the singlet
matter fields $C^L$. It is clear that the D-term~\eqref{thedterm}
gives mass to precisely one real combination of the K\"ahler moduli
$t^i$ and the matter fields $C^L$, the Higgs field. Expanding
\eqref{thedterm} around a vacuum (that is, a vanishing D-term,
$D^{U(1)}=0$) by writing $t^i=\langle t^i\rangle +\delta t^i$ and
$C^L=\langle C^L\rangle +\delta C^L$, we find that this massive linear
combination is given by \bea \label{linearcombo} D^{U(1)} =
-\frac{3}{16} \frac{\epsilon_S \epsilon_R^2}{\kappa_4^2} G_{jk}
c_1^j({\cal F}) \delta t^k -\sum_{L,\bar{M}}Q^L G_{L \bar{M}} \left(
  \left< C^L\right> \delta \bar{C}^{\bar{M}} + \delta C^L \left<
    \bar{C}^{\bar{M}} \right> \right) \ , \eea where
\begin{equation}
 G_{ij}=-\frac{\partial^2\ln{\cal V}}{\partial t^i\partial t^j}\;  \label{G}
\end{equation}  
is the K\"ahler moduli space metric, expressed in terms of the
Calabi-Yau volume ${\cal{V}}$ as defined in Eq.~(\ref{KT}). In this
discussion, we are ignoring terms which are higher order in $\left< C^L
\right>$ and inverse powers of $t^i$.  The Goldstone mode, the
corresponding linear combination of $T$-axions $\chi^i$ and $C^L$
phases, is absorbed by the $U(1)$ vector boson in the super-Higgs
effect. Since supersymmetry is unbroken, the mass of the linear
combination~\eqref{linearcombo} and the $U(1)$ vector boson must be
equal and they can be computed from Eq.~\eqref{linearcombo} or from
the $\chi^i$ and $C^L$ kinetic terms. Either way one finds the mass is
given by
\begin{equation}
\label{u1mass} 
m_{U(1)}^2=\frac{1}{s}\left(\frac{(3 \epsilon_S \epsilon_R^2)^2}{256\kappa_4^2} c_1^i(\cF)c_1^j(\cF)G_{ij} +\sum_{L,\bar{M}}Q^LQ^{\bar{M}} G_{L \bar{M}} 
\langle C \rangle^L \langle \bar{C} \rangle^{\bar{M}}\right) \; ,
\end{equation}
where $s={\rm Re}(S)$ is the real part of the dilaton. To obtain this result from Eq.~\eqref{linearcombo} it is necessary to canonically normalise the kinetic terms $\frac{1}{4 \kappa_4^2} G_{ij}\partial\delta t^i\partial\delta t^j$ and $G_{L\bar{M}}\partial\delta C^L\partial\delta\bar{C}^{\bar{M}}$. 

Let us discuss this result, beginning at a point on the stability
wall. At the stability wall, $\mu(\cF)=0$ and it follows from
Eq.~\eqref{thedterm} that $\langle C^L\rangle =0$ in order to have a
vanishing D-term. Hence, at the stability wall the Higgs field is a
linear combination of K\"ahler moduli $\delta t^i$ only, while the
Goldstone mode consists of $T$-axions $\chi^i$. The $U(1)$ and Higgs
mass are then given by the first term in Eq.~\eqref{u1mass} which
scales like $1/(st^2)$ for a typical K\"ahler modulus $t$. This is to
be compared with the mass of a typical gauge sector massive mode
which scales as $1/(st)$. We see that the $U(1)$ and Higgs masses are
suppressed by a factor $1/t$ and, hence, that in the large radius
limit and close to the stability wall it is consistent to keep these
fields in the low energy theory.

What happens as we move away from the stability wall into the supersymmetric region? From Eq.~\eqref{thedterm} the matter field vevs $\langle C^L\rangle$ are now non-vanishing and their fluctuations $\delta C^L$ contribute to the Higgs fields and the Goldstone mode. Once we move sufficiently away from the stability wall, so that $\mu(\cF)=\cO (t^2)$,  Eq.~\eqref{thedterm} implies $G_{L\bar{M}}C^L\bar{C}^{\bar{M}}\sim 1/t$. Hence, far away from the stability wall, the $U(1)$ mass~\eqref{u1mass} scales as $1/(st)$ and becomes comparable to a typical heavy gauge sector mass. In this limit, we should, therefore, remove the $U(1)$ vector multiplet and the Higgs multiplet from the low-energy theory. In this way, we recover the standard $E_6$ gauge group at a generic supersymmetric point in the K\"ahler moduli space. 

What about the matching of chiral multiplets to the usual analysis?
First, we note that far away from the stability wall the Higgs
multiplet becomes predominantly a linear combination of the matter
fields $C^L$. This means that there are massless ``$T$-moduli'' in
this region, which are slightly corrected versions of the naively
defined fields, consistent with the expectation from standard
heterotic compactifications. As for the $E_6$ singlet matter fields,
at the stability wall we have $h^1(X,\cF\otimes (V/\cF)^*)$ fields
$C^L$ and $h^1(X,\cF\otimes\cF^*)$ bundle moduli. Away from the
stability wall, one combination of $C^L$ fields is removed from the
low-energy theory so that we remain with $h^1(X,\cF\otimes
(V/\cF)^*)+h^1(X,\cF\otimes\cF^*)-1$ singlet fields. To match standard
heterotic compactifications, this must equal $h^1(X,V\otimes V^*)$,
the number of bundle moduli at a generic supersymmetric point in
K\"ahler moduli space. That is is indeed always the case is stated in
the following lemma.
\begin{lemma}\label{lemma_2}
Let $V$ be a holomorphic vector bundle with structure group $SU(3)$ defined over $X$, a Calabi-Yau $3$-fold. If $\cF$ is a rank $2$, stable sub-sheaf of $V$, defining the ``wall" in the dual K\"ahler cone given by $c_1^i(\cF) s_{i}=0$, such that $V$ is stable for $\mu(\cF)<0$ and unstable for $\mu(\cF)>0$, and further, $H^1(X, \cF^* \otimes V/\cF)=0$, then 
\beq h^1(X,V \otimes V^*)=h^1(X, \cF \otimes (V/\cF)^*)+h^1(X, \cF \otimes\cF^* )-1~,\eeq 
where $h^1(X,V\otimes V^*)$ is the generic dimension of bundle moduli space when $V$ is a stable bundle.\end{lemma}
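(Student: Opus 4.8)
The plan is to identify the right-hand side as the deformation count at the decomposable locus minus a single dimension coming from the rescaling freedom of the extension that builds the generic bundle. Write $\cK = V/\cF$, which is a line bundle with $c_1(\cK)=-c_1(\cF)$; on the wall $\mu(\cF)=0$ both $\cF$ and $\cK$ are stable of slope zero, and since they have different ranks they are non-isomorphic, so $\Hom(\cF,\cK)=\Hom(\cK,\cF)=0$ by the standard fact that a nonzero map between stable sheaves of equal slope is an isomorphism. At the split point $V_0=\cF\oplus\cK$ I would expand $V_0\otimes V_0^*$ and use $\cK\otimes\cK^*=\cO_X$ with $h^1(X,\cO_X)=0$, together with the hypothesis $H^1(X,\cF^*\otimes\cK)=0$ to kill the remaining cross term, obtaining
\beq
 h^1(X,V_0\otimes V_0^*)=h^1(X,\cF\otimes\cF^*)+h^1(X,\cF\otimes\cK^*)\;.
\eeq
This is precisely the right-hand side of the lemma \emph{without} the $-1$, so the whole content of the statement is that the generic modulus count drops by one relative to the tangent space at the split locus.

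The key step is to compute $h^1(X,V\otimes V^*)={\rm Ext}^1(V,V)$ for the generic bundle, realised as a non-split extension $0\to\cF\xrightarrow{\iota}V\to\cK\to 0$ with class $0\neq e\in{\rm Ext}^1(\cK,\cF)=H^1(X,\cF\otimes\cK^*)$. Applying $\Hom(\cK,-)$ and $\Hom(\cF,-)$ to this sequence, the connecting maps are cup product with $e$. The decisive one is $\cup e:\Hom(\cK,\cK)\to{\rm Ext}^1(\cK,\cF)$, sending $\mathrm{id}_\cK\mapsto e$; since $e\neq 0$ it is injective, so ${\rm Ext}^1(\cK,V)=H^1(X,\cF\otimes\cK^*)/\langle e\rangle$ has dimension $h^1(X,\cF\otimes\cK^*)-1$, and this is the origin of the $-1$. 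The parallel computation with $\Hom(\cF,-)$ gives, using $\Hom(\cF,\cK)=H^1(X,\cF^*\otimes\cK)=0$, that ${\rm Ext}^0(\cF,V)=\mathbb{C}\cdot\iota$ and ${\rm Ext}^1(\cF,V)=H^1(X,\cF\otimes\cF^*)$. Feeding these into the long exact sequence for $\Hom(-,V)$, and noting that the connecting map out of ${\rm Ext}^0(\cF,V)$ vanishes because $\iota_* e=0$ in ${\rm Ext}^1(\cK,V)$, collapses the sequence to
\beq
 0\to{\rm Ext}^1(\cK,V)\to{\rm Ext}^1(V,V)\to{\rm Ext}^1(\cF,V)\to 0\;,
\eeq
whence ${\rm Ext}^1(V,V)=h^1(X,\cF\otimes\cF^*)+h^1(X,\cF\otimes\cK^*)-1$, which is the claim.

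I expect the main obstacle to be the bookkeeping of the connecting homomorphisms, and in particular verifying that $\cup e$ has the expected rank wherever it appears. The clean tool here is Serre duality on the threefold ($K_X\cong\cO_X$): the pairing ${\rm Ext}^2(\cF,\cK)\times{\rm Ext}^1(\cK,\cF)\to{\rm Ext}^3(\cF,\cF)\cong\mathbb{C}$ is perfect, so for $e\neq 0$ the higher connecting map $\cup e:{\rm Ext}^2(\cF,\cK)\to{\rm Ext}^3(\cF,\cF)$ is surjective, which is exactly what keeps the remaining terms of the sequence consistent and ensures no further shift in the count. A useful sanity check, though not a substitute for the computation, is the endomorphism jump $h^0(X,V_0\otimes V_0^*)=2$ at the split locus versus $h^0(X,V\otimes V^*)=1$ for the stable generic $V$, the extra endomorphism being exactly the grading operator that scales the two summands against one another. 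I would nevertheless base the proof on the explicit long exact sequence rather than on a ``constant stacky dimension'' heuristic, since on a Calabi--Yau threefold ${\rm Ext}^2(V,V)\cong{\rm Ext}^1(V,V)$ and the moduli space is generically obstructed, so the naive dimension argument is not available.
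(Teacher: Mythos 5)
Your proof is correct and is, in substance, the transpose of the paper's argument. The paper computes $h^1(X,V\otimes V^*)$ by tensoring the defining extension with $V^*$ (i.e.\ applying $\Hom(V,-)$) and then resolving $H^i(X,\cF\otimes V^*)$ and $H^i(X,\cK\otimes V^*)$ through two further twisted sequences; you instead apply $\Hom(-,V)$ and resolve ${\rm Ext}^i(\cF,V)$ and ${\rm Ext}^i(\cK,V)$ with $\Hom(\cF,-)$ and $\Hom(\cK,-)$. The inputs are identical ($\Hom(\cF,\cK)=\Hom(\cK,\cF)=0$ from stability at the wall, $H^1(X,\cF^*\otimes\cK)=0$ from Lemma I, and one-dimensional endomorphism algebras for the stable objects $V$ and $\cF$), and your $-1$ --- the line $\langle e\rangle\subset{\rm Ext}^1(\cK,\cF)$ killed by the connecting map $\mathrm{id}_\cK\mapsto e$ --- is the counterpart of the paper's injective connecting map $H^0(X,\cF\otimes\cF^*)\to H^1(X,\cF\otimes\cK^*)$, $\mathrm{id}_\cF\mapsto e$; your phrasing has the merit of making the geometric origin of the $-1$ explicit, and your remark on the endomorphism jump $2\to 1$ is the correct heuristic. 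One loose end: the surjectivity of ${\rm Ext}^1(V,V)\to{\rm Ext}^1(\cF,V)$ requires control of the next term ${\rm Ext}^2(\cK,V)$, and the Serre-duality statement you invoke concerns $\cup e:{\rm Ext}^2(\cF,\cK)\to{\rm Ext}^3(\cF,\cF)$, which lives in the $\Hom(\cF,-)$ sequence and does not bear on this. The repair is one line: ${\rm Ext}^2(\cK,\cF)=H^2(X,\cF\otimes\cK^*)\cong H^1(X,\cF^*\otimes\cK)^*=0$ by Serre duality and the hypothesis, and ${\rm Ext}^2(\cK,\cK)=H^2(X,\cO_X)=0$, hence ${\rm Ext}^2(\cK,V)=0$ and the sequence closes as you claim.
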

\noindent The proof of this lemma can be found in Appendix A.

In summary, we see that the D-term~\eqref{thedterm} correctly
reproduces all of the expected physical features of gauge bundle
supersymmetry. Specifically, the D-term vanishes and, hence, preserves
supersymmetry precisely in the region where the gauge bundle is stable
while it is non-vanishing in the region where the bundle is
unstable. We have seen that in the large radius limit and at the
stability wall it is consistent to keep the massive $U(1)$ vector
multiplet and the Higgs multiplet in the low energy theory. Away from
the stability wall, however, these fields develop heavy masses and
have to be dropped. In this way, we recover the usual heterotic
effective theory at a generic, supersymmetric point in moduli
space. 

It is clear that the physics of the non-supersymmetric region of the
K\"ahler cone is dominated by the potential wall due to the
non-vanishing D-term~\eqref{thedterm}. Since there is no perturbative
vacuum in this region, we shall refrain from discussing the mass
spectrum in this part of field space. However, to finish this section
we shall make a few comments about the regime of the
non-supersymmetric region where our effective field theory analysis is
valid. In addition to the usual expansions of heterotic M-theory,
which will be discussed in more detail in section \ref{higherorder},
validity of our approach requires that the potentials present should
be below the compactification scale. Furthermore, the $C^L$ field vevs
should not be too large, since we have assumed they were small in
deriving the effective potential in this section. Obviously, both of
these conditions are satisfied close to the transition between the
supersymmetric and non-supersymmetric regions of moduli space and, as
such, the above discussion can be trusted.  Far into the
non-supersymmetric region one may not expect a four dimensional
description to exist at all. The potential grows in size as we
penetrate inside this zone until eventually it becomes of the same
mass scale as heavy states which have been truncated in our
analysis. To give some idea of scale, let us examine the size of the
potential in the non-supersymmetric region when all $C^{L}$ vevs
vanish. At a typcial non-supersymmetric point in field space, the
ratio of this potential to the fourth power of a typical mass of a
heavy gauge sector state is of order $s$, the dilaton, when working in
string units. As such, in a valid regime of the effective theory where
$s$ is large, one should typically not include regions with such a
potential in the four-dimensional theory. Close to the boundary with
the supersymmetric region (where the D-term potential vanishes
exactly), however, the potential is surpressed from its usual scale by
the smallness of $\mu({\cal F})^2 / {\cal V}^{4/3}$, which smoothly
increases from zero as we enter the non-supersymmetric part of the
K\"ahler cone. Thus, we can trust our analysis and investigate this
potential in the four-dimensional theory in the region near to the
boundary where $(\mu({\cal F})^2/{\cal V}^{4/3}) s <<1$.

\subsection{An example}\label{thefirstexample}
To illustrate the above general discussion, let us return to the
example of Section \ref{simpleeg}.  Recall, that we have defined the
monad bundle, \eref{monadeg} on the complete intersection Calabi-Yau
manifold~\eqref{cicy24}.  As mentioned in Section \ref{simpleeg}, we
find that the $SU(3)$ bundle, $V$, decomposes as $V \rightarrow \cF
\oplus \mathcal{K}$ where $\mathcal{K}=\cO_X(1,-1)$ is a line
bundle. The de-stabilizing sub-sheaf $\cF\subset V$ has rank
two\footnote{In this example $\cF$ is a bundle that injects into $V$
  everywhere in moduli space, while $\mathcal{K}$ is a line-bundle and
  \textit{only} injects at the decomposable point. That $\cF$ is
  indeed a bundle and not simply a sheaf has been checked explicitly
  using the computer algebra packages~\cite{Gray:2008zs}.}  and is
described by the monad 
\beq 0 \to \cF \to \cO_{X}(1,0) \oplus
\cO_{X}(0,1)^{\oplus 2} \to \cO_{X}(2,1) \to 0\; . \label{Fmonad}
\eeq 
The locus in the moduli space of $V$ where it decomposes as $V=\cF \oplus
\mathcal{K}$ corresponds to setting to setting to zero the bi-degree
$(1,2)$ polynomials in the monad map, $f$, given in
\eref{monadeg}. Using the results of
Refs.~\cite{Anderson:2008uw,stability_paper}, we can calculate the
dimensions of the cohomology groups of $\cF$ and $\mathcal{K}$ listed
in Table \ref{table1}. The results are summarised in
Table~\ref{table2}.
\begin{table}
\begin{center}
\begin{tabular}{|c|c|c|c|}
  \hline
  Representation &Cohomology &Physical $U(1)$charge &Dimension of Cohomology\\ \hline
  $({\bf 1},{\bf 2})_{-3}$ &  $H^1(X, {\cal F} \otimes {\cal K}^*)$ & $-3/2$ & $16$ \\ \hline
  $({\bf 1},{\bf 2})_{3}$ &  $H^1(X, {\cal F}^* \otimes {\cal K})$ & $3/2$ & $0$ \\ \hline
  $({\bf 1},{\bf 3})_0$ &  $H^1(X, {\cal F} \otimes {\cal F}^*)$ & $0$ & $7$ \\ \hline
  $({\bf 27},{\bf 1})_{2}$ & $H^1(X, {\cal K})$ & $1$ & $0$ \\ \hline
  $({\bf 27},{\bf 2})_{-1}$ & $H^1(X, {\cal F})$ & $-1/2$ & $2$ \\ \hline
  $(\overline{\bf 27},{\bf 1})_{-2}$ & $H^1(X, {\cal K}^*)$ & $-1$ & $0$ \\ \hline
  $(\overline{\bf 27},{\bf 2})_{1}$ & $H^1(X, {\cal F}^*)$ & $1/2$ & $0$\\ \hline
\end{tabular}
\mycaption{Particle content of the model defined by the bundle~\eqref{monadeg} at the decomposable locus where
$V={\cal F}\oplus{\cal K}$, with ${\cal F}$ defined by \eqref{Fmonad} and ${\cal K}=\cO_X(1,-1)$.}
\label{table2}
\end{center}
\end{table}
The {\it only} matter fields present which are charged under the
additional $U(1)$ symmetry appear in the first and fifth row in the
table. They both have negative charge under the four-dimensional
$U(1)$, listed in the third column.  In particular, this means that
the singlet matter fields $C^L$, which correspond to the first row in
the table, are all negatively charged, in accordance with Lemma
1. Further, in this particular model it turns out that the ${\bf 27}$
matter multiplets are also negatively charged.  This means, by gauge
invariance, that the F-term part of the potential vanishes. Having
only ${\bf 27}$ but no $\overline{\bf 27}$ multiplets means the ${\bf
  27}$ vevs will be forced to zero by the $E_6$ D-terms. Hence, they
do not contribute to the $U(1)$ D-term~\eqref{thedterm}. For the
present example and all models with similar particle content, the
$U(1)$ D-term~\eqref{thedterm} therefore describes the full vacuum
space. In general, models with positively charged $E_6$ multiplets or
anti-families in $\overline{\bf 27}$ exist. For such models one would
expect superpotential terms or D-flat directions with non-vanishing
${\bf 27}$ and $\overline{\bf 27}$ vevs, leading to a more complicated
structure of the vacuum space. As explained before, for such models
the $U(1)$ D-term~\eqref{thedterm} describes the part of the vacuum
space where $E_6$ is unbroken. A generalised expression, including the
family and anti-family degrees of freedom in the D-term, can trivially
be derived.

From Eqs.~\eqref{thedterm} the D-term for this example reads
\begin{equation}
\label{dtermex}
D^{U(1)} = \frac{3}{16} \frac{\epsilon_S \epsilon_R^2}{\kappa_4^2}\frac{\mu ({\cal F})}{\cal V} +\frac{3}{2} \sum_{L,\bar{M}=1}^{16}G_{L \bar{M}} C^L \bar{C}^{\bar{M}}\; ,
\end{equation}
where, from Eqs.~\eqref{slopeex}, \eqref{sdefex}, the slope is given by
\begin{equation}
 \mu(\cF)=\frac{1}{2}(-s_1+s_2)\; ,\quad s_1=4(t^2)^2\; ,\quad s_2=8t^1t^2+2(t^2)^2\; .
\end{equation} 
For the volume we have
\begin{equation}
 {\cal V}=2t^1(t^2)^2+\frac{1}{3}(t^2)^3\; .
\end{equation} 
In Figure \ref{plot1} we plot the D-term potential~\eqref{dtermex} as a function of the
dual K\"ahler cone variables $s_1$ and $s_2$, defined in Eq.~\eqref{sdefex}. The $C^L$ vevs, which are not plotted due to lack of dimensions, have been chosen to be at their minimum. The potential
rising from zero in the unstable region is clearly visible, as is the stability wall determined by the line with slope $1$ (in agreement with the bundle stability regions shown in Figure \ref{f:example7887}).
\begin{figure}[!ht]
\centerline{\epsfxsize=4in\epsfbox{plots12.eps}}
\mycaption{The potential in the dual K\"ahler cone as a functions
    of the two dual K\"ahler variables. The potential has been
    minimized with respect to the $C^{L}$ fields (which are not plotted here). The
    flat region of the potential is where the bundle is
    stable. The positive definite potential wall which one encounters
    upon entering the region where the bundle is unstable can clearly
    be seen, arising at the line with slope $=1$.}
\label{plot1}
\end{figure}
Figure~\ref{plot2} shows the D-term potential from Eq.~\eqref{dtermex}
as a function of the coordinate $s_1$, with $s_2$ chosen such that the
plot traces a line perpendicular to the stability wall in the K\"ahler
cone, and the radius $|C|$ of a representative singlet matter
field. This figure makes it clear that there is no ``boundary'' to the
vacuum space at the stability wall in K\"ahler moduli space if one
considers the full field space of the theory.
\begin{figure}[!ht]
  \centerline{\epsfxsize=4in\epsfbox{Plot3.eps}} \mycaption{ The
    D-term potential from Eq.~\eqref{dtermex}, as a function of $s_1$,
    a dual K\"ahler modulus, and the absolute value, $|C|$, of a
    representative singlet matter field $C$. In this plot we have
    chosen $s_2=4-s_1$ so that we are examining a line in K\"ahler
    moduli space perpendicular to the boundary between the
    supersymmetric and non-supersymmetric regions. The boundary itself
    is found at $s_1=2$ in this diagram. Since the exact form of the
    K\"ahler potential for the matter fields is not known, a simple,
    canonical form has been chosen for illustrative
    purposes.\label{plot2}}
\end{figure}

At the stability wall, where $t^2=4t^1$ and $\langle C^L\rangle =0$, the variation of the D-term~\eqref{dtermex} becomes
\begin{equation}
 D^{U(1)}=\frac{9 \epsilon_S \epsilon_R^2}{640\kappa_4^2}\frac{1}{(t^1)^2}(4\delta t^1-\delta t^2)\; .
\end{equation}
This shows that it is the combination $4\delta t^1-\delta t^2$ of K\"ahler moduli perpendicular to the stability wall which becomes massive at this point, as is also evident from Figure~\ref{plot2}. The mass of this linear combination is given by
\begin{equation}
 m_{U(1)}^2=\frac{3 (\epsilon_S \epsilon_R^2)^2}{256 \kappa_4^2}\frac{1}{s(t^1)^2}\; .
\end{equation} 
This expression shows explicitly the aforementioned $1/t^2$ scaling of
the $U(1)$ vector and Higgs masses which justifies keeping these
states in the low-energy theory close to the stability wall. As
discussed earlier, far away from the stability wall the Higgs
multiplet becomes pre-dominantly a linear combination of the $C^L$
multiplets and there are two massless K\"ahler moduli as one would
expect at a generic point in the supersymmetric region. From
Table~\ref{table2}, we have $16$ singlet matter fields $C^L$ and $7$
bundle moduli at the stability wall. With one of the $C^L$ becoming
massive one would expect $16+7-1=22$ bundle moduli at a generic
supersymmetric point in moduli space and this is indeed the number we
have computed for this example, see Eq.~\eqref{bundlemod}. This
illustrates the general statement in Lemma 2. An additional example, with an unrelated manifold and method of bundle construction is provided in Appendix B.

\subsection{Results in bundle stability from the effective theory}

So far, we have used mathematical information on vector bundle
stability to construct a low-energy description of bundle
supersymmetry. Now that we have established such a picture, let us
reverse our approach and see if can recover some of the mathematical
results in bundle stability from the effective field theory. A key
fact to remember is the interpretation of the charged matter fields,
$C^L$, as bundle moduli of the $SU(3)$ bundle - as described in
Section \ref{eft}. When these fields vanish the bundle decomposes as
$V = {\cal F} \oplus {\cal K}$, and has structure group $S(U(2) \times
U(1))$. For non-vanishing $C^L$ vevs, at a generic point in moduli
space, the bundle no longer splits up into a direct sum of
sub-bundles, and the structure group reverts to $SU(3)$.

First, in what we would expect to be the supersymmetric region of
K\"ahler moduli space, the fields $C^L$ {\it must} acquire a vev if
the D-term is to vanish. From this observation, we reproduce the fact
that the bundle will only produce a supersymmetric vacuum in the
so-called ``stable" region of K\"ahler moduli space, if it is at a
{\it generic} (that is, non-split) point in its moduli space, that is,
if the structure group is $SU(3)$.  If, in the normally ``stable"
region of K\"ahler moduli space, the bundle moduli move to the
decomposable locus where the structure group is $S(U(2)\times U(1))$,
the D-term is non-vanishing and supersymmetry is broken. This is all
in perfect agreement with the algebro-geometric analysis presented in
section \ref{stability}.

As we learned in Section \ref{stability}, at the stability wall in
K\"ahler moduli space, in order to have a supersymmetric theory, the
bundle must be split and semi-stable; that is, it must decompose into a direct
sum of stable bundles of the same slope. From \eqref{thedterm} we
again see this behaviour reproduced. The FI term vanishes on this
line in K\"ahler moduli space. Hence, the vanishing of the D-term
required by supersymmetry forces the $C^L$ field vevs to vanish -
taking us precisely to the split point in bundle moduli space.

As before, our discussion here allows us to go further than has been
previously possible and discuss what happens in the region where
supersymmetry is spontaneously broken as well. Although the
D-term~\eqref{thedterm} cannot vanish in this part of moduli space,
for fixed K\"ahler moduli, the D-term potential can be minimized by
vanishing fields $C^L$. Thus, the bundle will relax to the
decomposable locus in bundle moduli space throughout this region, as
well as at the stability wall, in the absence of non-perturbative
effects.

As a final comment, it is interesting to note that the D-term \eqref{thedterm} does not
depend on the complex structure fields. Thus, it should also be true
that the stability regions derived in Section \ref{bundlestab} are not
dependent on the choice of complex structure, for those bundles which
can give rise to supersymmetric theories in four dimensions. This
result, which is somewhat surprising from a mathematical perspective,
will be discussed further in the Appendix. 

\section{Higher Order Corrections} \label{higherorder}

In the analysis of proceeding sections we have worked to first order
in the strong coupling expansion parameter, $\epsilon_S$, and the
square of the matter fields. The strong coupling expansion parameter
$\epsilon_S$ itself, as opposed to the combination $\epsilon_S
\epsilon_R^2$ which is what was defined in \eqref{combodef} and has
appeared heretofore, is given by the following \cite{Witten:1996mz, Lukas:1998hk},
\begin{equation}
 \epsilon_S=\left(\frac{\kappa_{11}}{4\pi}\right)^{2/3}\frac{2\pi\rho}{v^{2/3}}\; .
\end{equation}
In the weakly coupled langauge, we have been working, up to this point,
at string theory tree-level. One can do better than this and work out
those $\cO(\epsilon_S^2)$ corrections that correspond to string
one-loop corrections\footnote{Corrections corresponding to higher orders in $\alpha'$ would require knowledge of the K\"ahler potential for bundle moduli, which is only known for special cases \cite{Gray:2003vw}. For a discussion of higher order
  corrections in $\alpha'$ to the supersymmetry/slope stability
  condition in the Type II context, see e.g. \cite{Douglas:2000ah}.}.
In particular, the D-term given in \eqref{thedterm}, receives
$\epsilon_S^2$ corrections which can be calculated. Corrections to the
matter field part of the D-term~\eqref{thedterm} are
uninteresting. The only fact that we have used about this term is the
positive definite nature of the matter field metric $G_{L \bar{M}}$,
and this will not be changed by such corrections. However, the
$\cO(\epsilon_S^2)$ corrections to the FI term are of some interest
and we now proceed to derive these.  At lowest order, the $T$-moduli
had a non-trivial $U(1)$ transformation while all other moduli fields
were invariant. As we will see, at higher order, the dilaton $S$ and
the five-brane position moduli $Z^\alpha$, where $\alpha =1,\dots ,N$
numbers the different five-branes, also transform non-trivially. We
start by defining these four-dimensional superfields in terms of the
underlying geometric fields. The definition of the $T$-moduli,
$T^i=t^i+2i\chi^i$, is as previously given (see Eq.~\eqref{Tdef}). For
the dilaton and the five-brane moduli we
have~\cite{Brandle:2001ts,Mattias} \bea S &=& V_0 +
\pi\epsilon_S\sum_{\alpha=1}^N \beta^\alpha_i t^i z_\alpha^2 +
i\left(\sigma
  +2\pi\epsilon_S \sum_\alpha^N \beta^\alpha_i \chi^i  z_\alpha^2\right)\label{Sdef}\\
Z^\alpha &=& \beta^\alpha_i \left( t^i z_\alpha + 2 i
  (-n_\alpha^i\nu_\alpha + \chi^i z_\alpha)\right)\; .\label{Zdef}
\eea Here, $V_0$ is the Calabi-Yau volume averaged over the orbifold
and $\sigma$ is the dilatonic axion, the dual of the four-dimensional
two-form $B_{\mu\nu}=C_{11\mu\nu}$. Further, $z_\alpha$ is the
distance from the left orbifold fixed plane to the $\alpha$-th
five-brane, the $\beta^\alpha_i$ are the charges associated to the
$\alpha$-th five-brane and
$n_\alpha^i=\beta^\alpha_i/(\sum_i(\beta^\alpha_i)^2)$. The fields
$\nu_\alpha$ are axions located on the five-brane world-volumes.  In
order to compute the corrections to the D-term, we need to consider the
$U(1)$ transformations of the fields at order $\epsilon_S^2$. For the
$T$-moduli and the matter fields, these transformations are given in
Eqs.~\eqref{chiX} and \eqref{CX} with no further corrections at
$\cO(\epsilon_S^2)$.  The transformation of the dilaton and five-brane
position superfields are slightly more subtle in their origin. To
discuss the dilaton, we consider the relevant terms in the
four-dimensional effective action which involve the two-form
$B_{\mu\nu}=C_{11\mu\nu}$. These terms are~\cite{Lukas:1999nh} \bea
S_{4d,B} =-\frac{1}{2 \kappa_4^2} \int_{{\cal M}_4} \left[ V_0^2 H
  \wedge *H + \frac{3}{4} \pi \epsilon_S^2 \epsilon_R^2 c_1^i({\cal
    F}) \beta_i B \wedge F \right] \; , \eea where $H=dB+\dots$ and
the dots indicate a Chern-Simons three-form which is irrelevant for
the present discussion. Further, $F=dA$ is the field strength of the
$U(1)$ gauge field $A$ and the integer charges $\beta_i$ of the $E_8$
sector under consideration are defined as \bea \beta_i = \frac{1}{16
  \pi«^2} \int_X \left( \textnormal{tr} F\wedge F - \frac{1}{2}
  \textnormal{tr}R\wedge R\right)\wedge J_i \; .  \eea In order to
dualise the two-form $B$ to the dilatonic axion $\sigma$, we set
$H_0=dB$ and add to the above action the term \bea
\frac{1}{\kappa_4^2} \int_{{\cal M}_4} H_0\wedge d \sigma\;.  \eea By
integrating out $H_0$, we find the kinetic term \bea
S_{4d,\textnormal{dual}} =-\frac{1}{\kappa_4^2} \int_{{\cal M}_4}
\left( \frac{1}{V_0^2} \Sigma \wedge * \Sigma \right)   \eea for
the dilatonic axion $\sigma$, where the ``field strength'' $\Sigma$ is
defined as \bea \Sigma= d \sigma - \frac{3}{8} \pi \epsilon_S^2
\epsilon_R^2 c_1^i({\cal F}) \beta_iA\; .  \eea This field strength
needs to be invariant under $U(1)$ gauge transformations with $\delta
A=-D\tilde{\epsilon}$, which implies the following transfomation law
for the dilatonic axion.  \bea \label{sigmaX} \delta \sigma =
-\frac{3}{8} \pi \epsilon_S^2 \epsilon_R^2 c_1^i({\cal F}) \beta_i \;
\tilde{\epsilon}\; .  \eea The five-brane axions $\nu_\alpha$ do not
transform under $U(1)$ transformations, so the $\chi^i$
transformation~\eqref{chiX} and the above $\sigma$
transformation~\eqref{sigmaX} are all we have to take into account at
the component field level. Note that, from Eqs.~\eqref{Sdef},
\eqref{Zdef} and \eqref{Tdef}, this implies non-trivial
transformations for all superfields $S$, $Z^\alpha$ and $T^i$. In
particular, the five-brane moduli superfields $Z^\alpha$ in
Eq.~\eqref{Zdef} pick up a non-trivial transformation through their
dependence on the $T$-axions $\chi^i$.

Taking these new field transformations into account, we may now
calculate the correction to our D-term, \eqref{thedterm}, at order
$\epsilon_S^2$.  For this we need the relevant corrections to the
K\"ahler potential. In Eq.~\eqref{KT} we have already given the
K\"ahler potential for the $T$-moduli which remains unchanged at the
orders we require. The K\"ahler potential for the dilaton and the
five-brane moduli is given by
\begin{equation}
K_S=-{\rm ln}\left[S+\bar{S}- \pi\epsilon_S \sum_{\alpha=1}^{N}\frac{(Z^\alpha+\bar{Z}^\alpha)^{2}}
    {\beta_i^\alpha(T^i+\bar{T}^i)}\right]\
\end{equation}
Given these expressions, we may follow exactly the same procedure as in
Section \ref{eft} to obtain the corrected D-term
\bea \label{correcteddterm} D^{U(1)} = f - \sum_{L \bar{M}} Q^L G_{L \bar{M}}
C^L\bar{C}^{\bar{M}}\; .  \eea Here, the FI term $f$ is given by \bea
f &=& f^{(0)} + f^{(1)} \\
f^{(0)}&=& \frac{3}{16} \frac{\epsilon_S \epsilon_R^2}{\kappa_4^2}\frac{\mu(\cF)}{\cV}\\
f^{(1)}&=& \frac{3 \pi \epsilon_S^2 \epsilon_R^2}{8
  \kappa_4^2}\frac{1}{S+\bar{S}}\left[\beta_ic_1^i(\cF) +
  \pi\sum_{\alpha=1}^N\frac{(Z^\alpha+\bar{Z}^\alpha)^2}{(\beta_i^\alpha
    (T^i +\bar{T}^i))^2}\beta_i^\alpha c_1^i({\cal F})\right]
\label{2o}
\eea We see that the leading contribution, $f^{(0)}$, to the FI term
precisely reproduces our previous result~\eqref{thedterm} while the
correction term $f^{(1)}$ is surpressed by an extra power of
$\epsilon_S$, as expected. As mentioned earlier, the second term in
\eqref{correcteddterm} will also receive corrections. However, since
these small corrections cannot change the sign of this term they are
of no immediate interest to us here.

The $\cO(\epsilon_S^2)$ correction $f^{(1)}$ to the FI term depends on fields other
than the K\"ahler moduli. This means that the position of the stability wall in the K\"ahler cone will change
slightly as we change, for example, the value of the dilaton or the five-brane moduli $Z^\alpha$. Naively, this suggests that
we have lost the link, as espoused in the rest of the paper, between the mathematical stability analysis and the four-dimensional effective field theory. However, this is not the case.

The crucial point is that the four-dimensional fields which appear in
the above expression are not quite those which are ``experienced
by the gauge fields''. In heterotic M-theory, the vacuum solution in
eleven dimensions includes a warping in the eleventh direction which
introduces dependence of the K\"ahler moduli on the orbifold coordinate.
In other words, the six dimensional manifold changes shape slightly as we
traverse the $S^1/\mathbb{Z}_2$ orbifold direction. The four dimensional K\"ahler moduli $t^i$ which appear in the above
expressions (for example, in Eq.~\eref{correcteddterm}) are the orbifold {\it average} of these varying K\"ahler parameters. The gauge fields of our bundle, however, reside on one of the orbifold fixed planes at
either end of the interval. Thus, in performing the stability analysis
of Sections \ref{stability} and \ref{bundlestab}, it is not the
averaged quantities which are relevant, but the K\"ahler moduli of the
Calabi-Yau $3$-fold at the relevant orbifold fixed plane. It is precisely the difference between those K\"ahler parameters at the orbifold fixed plane and the averaged ones which accounts for the
correction given in equation \eqref{2o}. This may be checked
explicitly using the expressions for the warping of heterotic M-theory
given in Refs.~\cite{Lukas:1997fg,Lukas:1998yy,Lukas:1998tt,Mattias}. Note that, in the case of
Abelian bundles, such corrections have been discovered elsewhere in
the literature~\cite{Blumenhagen:2005ga,Blumenhagen:2006ux}.

To make this precise, let us drop the requirement that we write the FI
term in terms of four-dimensional superfields. Instead, we introduce
the K\"ahler moduli $\tilde{s}_i$ of the Calabi-Yau manifold on the
relevant orbifold fixed plane (as opposed to the averaged K\"ahler
moduli $s_i$) and denote by $\tilde{\mu}(\cF)=c_1^i(\cF)\tilde{s}_i/2$
and $\tilde{\cV}$ the corresponding slope and volume. Then one can
show that the corrected D-term~\eqref{correcteddterm} can be written
as \bea D^{U(1)} = \frac{3}{16} \frac{\epsilon_S
  \epsilon_R^2}{\kappa_4^2}\frac{\tilde{\mu}(\cF)}{\tilde{\cV}}
-\sum_{L,\bar{M}} Q^L G_{L \bar{M}} C^L \bar{C}^{\bar{M}} \eea All
correction terms have disappeared and the FI term is proportional to
the slope computed for K\"ahler parameters on the orbifold plane,
where the bundle is actually defined. This is precisely the slope one
would define in a mathematical context. Hence, our interpretation of
the $U(1)$ D-term in terms of gauge bundle stability is completely
unchanged by higher order corrections.

\section{Conclusions and Further Work} \label{conclusions}

In this paper, we have explored in detail the structure of heterotic
theories near a stability wall, separating regions in K\"ahler moduli
space where a non-Abelian internal gauge bundle preserves or breaks
supersymmetry. We have found four-dimensional effective theories valid
near such boundaries which provide us with an explicit low-energy
description of bundle supersymmetry breaking and with a physical
picture for the mathematical notion of slope stability.  A key
observation in our analysis is that at a stability wall the structure
group of the internal gauge bundle decomposes and acquires a $U(1)$
factor. This leads to an additional $U(1)$ symmetry in the
four-dimensional effective theory which is Green-Schwarz
anomalous. The associated $U(1)$ D-term consists of a FI term and a
matter field term and it controls the supersymmetry properties of the
bundle from a four-dimensional point of view. Specifically, the FI
term is proportional to the slope $\mu(\cF)$ of the destabilizing
sub-sheaf $\cF\subset V$ of the internal vector bundle $V$. For
negative slope the bundle $V$ is stable. In the four-dimensional
theory this is reproduced, since non-trivial vacuum expectation values of
$U(1)$ charged matter fields compensate the FI term so that the $U(1)$
D-term vanishes and supersymmetry is preserved. For positive slope,
that is an unstable bundle $V$, the FI term changes sign. As all
$U(1)$ charges have the same sign, the FI term cannot be cancelled by
matter field vevs in this case and supersymmetry is broken. In four
dimensions, the relation between the theory at the stability wall and
at a generic supersymmetric point is governed by the super-Higgs
effect. As one moves away from the stability wall the $U(1)$ vector
field mass increases and has to be removed from the low-energy theory,
together with the associated Higgs multiplet.  The implied matching of
degrees of freedom can be precisely reproduced by a cohomology
calculation.  We have also shown that our results are robust under
corrections suppressed from the leading effects by a power of
$\epsilon_S$ (the strong coupling expansion parameter), corresponding
to string one-loop corrections. While the FI term does receive
corrections at this order, they have a simple interpretation in terms
of 11-dimensional geometry. While the standard four-dimensional
K\"ahler moduli $t^i={\rm Re}(T^i)$ measure the {\it average}
Calabi-Yau size across the orbifold, the gauge bundle and its
stability properties are sensitive to the Calabi-Yau moduli,
$\tilde{t}^i$, on the relevant orbifold fixed plane. The order
$\epsilon_S^2$ corrections to the FI term simply accounts for the
difference between those two types of moduli when the D-term is
expressed in terms of the standard four-dimensional fields $t^i$. In
other words, the order $\epsilon_S^2$ terms disappear when the D-term is
written in terms of $\tilde{t}^i$. Hence, these one-loop corrections do
{\it not} suggest a modification of the mathematical notion of bundle
stability but simply reflect the fact that the gauge fields are
localised in the orbifold direction.
We stress that the basic picture we provide here, while illustrated
for the sake of clarity with vector bundles with $SU(3)$ structure
group decomposing into $S(U(2)\times U(1))$, is very general. We expect
its main features to holds for any Calabi-Yau three-fold and for any
construction of vector bundles. Indeed, the validity of our approach
has been checked in a large number of disparate examples.

\vskip 0.4cm

Our results suggest many further directions for research, some
mathematical in nature and some physical.  It would be of great
interest to study various generalisations and extensions of the
mechanism described in this paper. In the present paper, we have
focused, when describing examples, on simple cases with two K\"ahler
moduli, so that the stability walls in K\"ahler moduli space are
lines. We stress, however, that the phenomenon we have described is
much more general and appears in K\"ahler cones of any dimensionality
greater than one. In general, the stable region is a sub-cone of the
K\"ahler cone with each co-dimension one face giving rise to a D-term
of the type we have described. At each generic point on the stability
wall only one of these D-term will be relevant. However, for more than
two K\"ahler moduli co-dimension one faces can intersect so that there
are special loci on the stability wall where two or more D-terms need
to be considered at a time. Further study of more complicated examples
would be an interesting future line of research.  Further
generalisation could involve considering more complicated splitting
types at the stability wall, such as $SU(3)\rightarrow S(U(1)\times
U(1)\times U(1))$, and $SU(n)$ bundle structure groups with
$n>3$. Indeed, the authors have already studied such cases in detail
and hope to present examples of this type in future work. An
interesting observation is that the four-dimensional effective field
theory only depends on the structure of the gauge bundle at the split
locus in moduli space. This suggests that phenomena similar to the
ones described here can link nominally different bundles together via
smooth transitions in physical moduli space. The authors are currenty
actively investigating this effect.

From a more phenomenological perspective, the potential we provide may
be of some interest in moduli stabilization \cite{burt2}. Its
perturbative nature means that this potential is relatively
steep. Thus, if one were to balance it against a non-perturbative
potential, such as that due to membrane instantons, one might be able
to obtain a naturally small scale of supersymmetry breaking. An
investigation of whether such an idea is phenomenologically viable is
underway. Global remnants of the anomalous $U(1)$ symmetry at the stability wall may have implications for the structure of the theory even at a generic supersymmetric point in moduli space. For example, one might be able to conclude that certain superpotential terms are forbidden. Such considerations may be used to constrain the type of vector bundles which can lead to realistic low-energy models.

Finally one can imagine attempting to use the analysis described in
this work to investigate what may be said about bundle stability
purely from the point of view of the four-dimensional effective
theory. One goal of such work would be to give a simple set of rules, for example based
on four-dimensional anomaly cancellation, which would guarantee that a given vector bundle on a Calabi-Yau manifold
is stable in a certain region of moduli space.

\section*{Acknowledgments}
The authors would like to thank to Nathan Seiberg, Juan Maldacena, Ron Donagi, and Ignatios Antoniadis for useful discussions. The work of L.~A.~ and B.~A.~O. is supported in part by the DOE under contract No. DE-AC02-76-ER-03071. Further, B.~A.~O. would like to acknowledge the Ambrose Monell Foundation at the IAS for partial support. A.~L. is supported by the EC 6th Framework Programme MRTN-CT-20040503369. J.~G. is supported by STFC UK.

\section{Appendix A: Two lemmas and a conjecture}\label{Appendix}

In this section, we will state the two lemmas used in Section \ref{comparison} (regarding the dimensions of certain cohomology groups) somewhat  more formally and provide proofs. These results will be an example of the types of cohomology conditions one can derive in the context of slope stability. Similar conditions can be derived when different $SU(n)$ bundle decompositions are considered or when additional enhanced $U(1)$ symmetries are present. Furthermore,  we will make a conjecture regarding the complex structure dependence of a stability wall.

Let $X$ be a Calabi-Yau three-fold with K\"ahler form $J$ and $V$ a holomorphic vector bundle
defined over $X$ with structure group $SU(n)$, where $n=3,4,5$. We will
consider a case in which a single sub-sheaf ${\cal F}\subset V$ of rank $n-1$
de-stabilizes $V$ in some part of the K\"ahler moduli space of $X$. We define the slope, $\mu (\cF)$, of
$\cF$ for a given polarization $J=t^k J_k$ by
\begin{equation}
 \mu(\cF)=\frac{1}{{\rm rk}(\cF)}\int_Xc_1(\cF)\wedge J\wedge J\; .
\end{equation} 
Let us further suppose that $\cF$ itself is slope-stable and has a slope such that
it destabilizes only part of the K\"ahler cone (as in Fig.~\ref{f:stab} in Section \ref{bundlestab}). Thus, $V$ is stable for
polarizations $J$ with $\mu(\cF)<0$ and unstable for polarizations $J$ with $\mu(\cF)>0$. The two regions are separated by
a stability wall in K\"ahler moduli space where $\mu(\cF)=0$ and $V$ is semi-stable. Using
the short exact sequence
\beq\label{ferret}
 0 \to \cF \to V \to V/\cF\to 0\; ,
\eeq
we note, as in Sections \ref{bundlestab} and \ref{eft}, that we can write $V = {\cal F} \oplus V/\cF$ as an element in its S-equivalence class.

Our physical four-dimensional picture of bundle stability suggests certain conditions on bundle cohomology which we now discuss. Due to the Fayet-Iliopoulos (FI) D-term \eref{thedterm} derived in this paper, the preservation of supersymmetry in the effective theory depends upon the existence (or absence) of certain charged matter fields (the fields $C^L$ in \eref{thedterm}) described by $H^1(X, \cF \otimes (V/\cF)^*)$ and $H^1(X, \cF^* \otimes  V/\cF)$. Specifically, in order to preserve supersymmetry in the region of moduli space with $\mu(\cF)<0$, the fields $C^L$ described by $H^1(X, \cF \otimes (V/\cF)^*)$ must acquire a vacuum expectation value and cancel the FI term in \eref{thedterm}, hence setting the potential to zero in this region of K\"ahler moduli space. In particular, this means that such fields must exist and hence $H^1(X, \cF \otimes (V/\cF)^*)\neq 0$. On the other hand, if the region of moduli space for which  $\mu(\cF)>0$ is to have broken supersymmetry, there must be {\it no} fields $C^L$ described by $H^1(X, \cF^* \otimes V/\cF)=0$. Stating this more formally, we must have the following Lemma:

\begin{lemm}\label{vanishing}
Let $V$ be a holomorphic vector bundle with structure group $SU(n)$ ($n=3,4,5$) defined over $X$, a Calabi-Yau $3$-fold with K\"ahler form $J$. If $\cF$ is a rank $(n-1)$, stable sub-sheaf of $V$, defining a ``stability wall" in in the K\"ahler cone given by $\mu(\cF)=0$, such that $V$ is stable for $\mu(\cF)<0$ and unstable for $\mu(\cF)>0$, then $H^1(X, \cF \otimes (V/\cF)^*) \neq 0$ and $H^1(X, \cF^* \otimes V/\cF)=0$ (for any effective field theory describing $V$).
\end{lemm}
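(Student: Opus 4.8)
The plan is to treat the two assertions separately: the non-vanishing of $H^1(X,\cF\otimes(V/\cF)^*)$ is essentially a restatement of the existence of $V$ as a bundle, whereas the vanishing of $H^1(X,\cF^*\otimes V/\cF)$ carries the real content and will require the stability-wall hypothesis in an essential way. Throughout I write $\cK=V/\cF$; since $c_1(V)=0$ we have $c_1(\cK)=-c_1(\cF)$, so $\cK$ is a line bundle with $\mu(\cK)=-(n-1)\mu(\cF)$, and both slopes vanish on the wall. I will also use that $\cF$ is stable by hypothesis and that $\cK^*=(V/\cF)^*$ is stable, as shown in Section~\ref{bundlestab}. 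Note that both $H^1(X,\cF\otimes\cK^*)\cong{\rm Ext}^1(\cK,\cF)$ and $H^1(X,\cF^*\otimes\cK)\cong{\rm Ext}^1(\cF,\cK)$ depend only on the fixed stable sheaves $\cF,\cK$, so there is no moduli ambiguity in the statement.

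First I would prove the non-vanishing. The inclusion $\cF\hookrightarrow V$ presents $V$ as the extension $0\to\cF\to V\to\cK\to 0$, whose class lies in ${\rm Ext}^1(\cK,\cF)\cong H^1(X,\cK^*\otimes\cF)=H^1(X,\cF\otimes(V/\cF)^*)$. At a generic point of its moduli space $V$ is an indecomposable $SU(n)$ bundle, so this extension is non-split and its class is a nonzero element; hence $H^1(X,\cF\otimes(V/\cF)^*)\neq 0$. Next I would clear the outer cohomologies for the second group: since $\cF$ and $\cK$ are stable of equal slope on the wall, and any nonzero map between stable sheaves of equal slope is an isomorphism while $\rk(\cF)=n-1\neq 1=\rk(\cK)$, we get $\Hom(\cF,\cK)=\Hom(\cK,\cF)=0$, i.e. $H^0(X,\cF^*\otimes\cK)=H^0(X,\cF\otimes\cK^*)=0$. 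Serre duality on the Calabi-Yau threefold (where $K_X\cong\cO_X$) then forces $H^3(X,\cF^*\otimes\cK)=H^3(X,\cF\otimes\cK^*)=0$, so for the vanishing statement only the middle degree remains.

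The remaining step, $H^1(X,\cF^*\otimes\cK)=0$, I would attack by contradiction. A nonzero class in $H^1(X,\cF^*\otimes\cK)\cong{\rm Ext}^1(\cF,\cK)$ yields a non-split extension in the opposite direction,
\[
0\to\cK\to W\to\cF\to 0,
\]
in which $\cK$ injects with $\mu(\cK)=-(n-1)\mu(\cF)$, strictly positive precisely on the stable side $\mu(\cF)<0$, while $\mu(W)=0$. Thus $W$ is destabilized by $\cK$ exactly where $V$ is assumed stable. Since $W$ and $V$ are the two off-diagonal deformations of the common poly-stable object $\cF\oplus\cK$ within a single S-equivalence class, a nonzero ${\rm Ext}^1(\cF,\cK)$ means the decomposable locus carries deformation directions producing bundles unstable on the $\mu(\cF)<0$ side. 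This is incompatible with the hypothesis that $\cF$ is the destabilizing sub-sheaf of a single clean wall below which $V$ is stable; it is precisely the ``branch'' case excluded by the assumption that the effective theory describes only $V$ (cf.\ the footnote to the lemma). Hence ${\rm Ext}^1(\cF,\cK)=0$.

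The hard part is this last step. Because $W\not\cong V$, the contradiction is not that $V$ itself acquires a destabilizing sub-sheaf, so the vanishing does not follow from homological algebra, simplicity and Serre duality alone; indeed the long exact sequences obtained by applying $\Hom(\cF,-)$, $\Hom(-,\cK)$ and $\Hom(\cK,-)$ to the defining sequence are all formally consistent with ${\rm Ext}^1(\cF,\cK)\neq 0$. The genuine work is to make precise the sense in which a nonzero ${\rm Ext}^1(\cF,\cK)$ is inconsistent with the assumed global wall structure for the single bundle $V$ -- for instance, by showing that switching on this direction moves $\cF\oplus\cK$ out of the component of moduli space whose generic member is the stable $V$, or that it forces a Harder-Narasimhan filtration destabilizing the generic deformation for $\mu(\cF)<0$. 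Converting the physical input -- that all charged matter $C^L$ carries one sign of charge, so that the theory describes $V$ rather than a branch -- into this geometric statement is the crux of the proof.
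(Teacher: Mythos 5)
Your proposal is correct and follows essentially the same route as the paper: your extension-class argument for $H^1(X,\cF\otimes(V/\cF)^*)\neq 0$ is the same fact as the paper's long-exact-sequence argument (the connecting map sends $\mathrm{id}_{\cK}\in H^0(X,\cK\otimes\cK^*)$ to precisely that extension class, with injectivity following from $H^0(X,V\otimes\cK^*)=0$, which holds by stability of $V$ for $\mu(\cF)<0$). As for the ``hard part'' you flag at the end, you need not worry that you are missing a deeper argument --- the paper's proof of $H^1(X,\cF^*\otimes V/\cF)=0$ goes no further than yours: it shows the extension $\tilde V$ built from a nonzero class cannot be isomorphic to $V$ (else $V/\cF$ would destabilize $V$ for $\mu(\cF)<0$ and $V$ would be stable nowhere) and then dismisses the case $H^1(X,\cF^*\otimes V/\cF)\neq 0$ as a ``branch structure'' excluded by the parenthetical hypothesis ``(for any effective field theory describing $V$)'', so your level of rigor exactly matches the paper's.
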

\begin{proof} We begin with the first condition $H^1(X, \cF\otimes (V/\cF)^*) \neq 0$.
Consider twisting the sequence \eref{ferret} by the
line bundle ${\cal K}^*$, where ${\cal K}=(V/\cF)^{**} \approx V/\cF$. This leads to the short exact sequence
\beq 0 \to \cF\otimes {\cal K}^* \to V\otimes {\cal K}^* \to {\cal
  K}\otimes {\cal K}^* \to 0\; .
\eeq
Then the associated long exact sequence in
cohomology contains the terms
\beq 0 \to H^0(X, \cF\otimes {\cal K}^*)
\to H^0(X,V\otimes {\cal K}^*) \to H^0(X, {\cal K}\otimes {\cal K}^*)
\to H^1(X,\cF\otimes {\cal K}^*) \to \ldots
\eeq

Because $V$ is stable for $\mu(\cF)<0$, it must follow that at a generic point in the bundle moduli space of $V$, $H^0(X,V\otimes {\cal K}^*)=0$ (otherwise ${\cal K}$ would be a sub-sheaf of $V$ and would destabilize $V$). Furthermore, since ${\cal K}$ is a line-bundle on a Calabi-Yau manifold,  $H^0(X, {\cal K}\otimes {\cal K}^*)=1$. As a result, we have \beq 0 \to H^0(X, {\cal K}\otimes
{\cal K}^*) \to H^1(X,\cF\otimes {\cal K}^*) \to \ldots \eeq and it is
clear that we {\it must} have $H^1(X,\cF\otimes {\cal K}^*)\neq
0$ in order to avoid a contradiction. However, since the value of this cohomology is unaffected as we
move to the decomposable locus in the moduli space of $V$ (as
described in Section \ref{bundlestab}), we see that $H^1(X,\cF\otimes
(V/\cF)^*)\neq 0$ is satisfied, as expected.

We turn now to the second cohomology condition that we must investigate. In order for the theory to break supersymmetry above the line with $\mu(\cF)=0$, it must be the case that $H^1(X, \cF^* \otimes V/\cF)=0$. This too follows immediately from the definition of a stability boundary. Suppose that $H^1(X, \cF^* \otimes V/\cF) \neq 0$, then there exists a non-trivial extension:
\beq
0 \to V/\cF \to \tilde{V} \to \cF \to 0~.
\eeq
But, by definition, this implies that there exists an injective map from $V/\cF$ to $\tilde{V}$ at generic points in moduli space (away from the decomposable locus). Thus, we must ask, can $\tilde{V}$ be isomorphic to $V$? If this is the case, then $V/\cF$ is a sub-sheaf of $V$ that destabilizes $V$ in the region $\mu(\cF)<0$. But by construction, we know that $\cF$ destabilizes $V$ in the region with $\mu(\cF)>0$, hence the bundle is stable nowhere in K\"ahler moduli space. This is a contradiction, since we are considering the situation in which the line $\mu(\cF)=0$ defines the boundary of a stable/unstable transition in the moduli space. Thus, if $V$ is stable for $\mu(\cF)<0$, then the extension $Ext^1(V/\cF, \cF)=H^1(X, \cF^* \otimes V/\cF)$ defining $\tilde{V}$ is {\it not} isomorphic to $V$. Thus, if $H^1(X, \cF^* \otimes V/\cF) \neq 0$ we are considering an effective theory in which branch structure is present, connecting more than one vector bundle. However, for the statement of this lemma, we shall consider only the effective theory describing $V$. \end{proof}

Note that similar vanishing conjectures could be formulated for other
possible bundle decompositions of $V$. However, in the presence
of additional $U(1)$ gauge fields and different decompositions of
$E_8$ under these symmetries, each case must be investigated on an
individual basis.

Next, we turn to the proof of the second lemma in Section \ref{comparison}. It states the following:
\begin{mma}
Let $V$ be a holomorphic vector bundle with structure group $SU(n)$ defined over $X$, a Calabi-Yau $3$-fold with K\"ahler form $J$. If $\cF$ is a rank $n-1$, stable sub-sheaf of $V$, defining a stability wall in in the K\"ahler cone given by $\mu(\cF)=0$, such that $V$ is stable for $\mu(\cF)<0$ and unstable for $\mu(\cF)>0$, and $H^1(X, \cF^* \otimes V/\cF)=0$, then 
\beq h^1(X,V \otimes V^*)=h^1(X, \cF \otimes (V/\cF)^*)+h^1(X, \cF \otimes\cF^* )-1~,\eeq 
where $h^1(X,V\otimes V^*)$ is the generic dimension of bundle moduli space when $V$ is a stable bundle.\end{mma}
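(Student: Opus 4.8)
The plan is to compute $h^1(X,V\otimes V^*)$ directly at a \emph{generic} (hence stable and indecomposable) point of the family, where $V$ sits in the short exact sequence $0\to\cF\to V\to\cK\to 0$ with $\cK:=V/\cF$ a line bundle and nonzero extension class $e\in{\rm Ext}^1(\cK,\cF)\cong H^1(X,\cF\otimes\cK^*)$. Since $H^1(X,\cO_X)=0$ on a Calabi-Yau, line bundles are rigid, so $\cK$ and $\cF$ are the same fixed objects for every member of the family; consequently $h^1(X,\cF\otimes\cF^*)$ and $h^1(X,\cF\otimes\cK^*)$ --- the two quantities on the right-hand side --- are intrinsic to $\cF,\cK$ and do not depend on $e$. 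Writing $a:=h^1(X,\cF\otimes\cF^*)$ and $b:=h^1(X,\cF\otimes\cK^*)$, the goal is to show the generic value equals $a+b-1$.

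First I would assemble the vanishing inputs. Stability of $\cF$ and $\cK$ at equal slope (with distinct ranks, hence non-isomorphic) gives ${\rm Hom}(\cF,\cK)={\rm Hom}(\cK,\cF)=0$ and ${\rm End}(\cF)=\mathbb{C}$; the hypothesis supplies $H^1(X,\cF^*\otimes\cK)=0$, and Serre duality on the Calabi-Yau three-fold then yields $H^2(X,\cF\otimes\cK^*)=0$, the relations $h^{3-i}(X,\mathcal{E})=h^i(X,\mathcal{E}^*)$ for a bundle $\mathcal{E}$, and $h^\bullet(X,\cO_X)=(1,0,0,1)$. Next I would dualise the defining sequence to $0\to\cK^*\to V^*\to\cF^*\to 0$ and tensor it by $\cF$ and by $\cK$ to obtain the two auxiliary sequences $0\to\cF\otimes\cK^*\to\cF\otimes V^*\to\cF\otimes\cF^*\to 0$ and $0\to\cO_X\to\cK\otimes V^*\to\cK\otimes\cF^*\to 0$.

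Running the long exact sequences with the vanishings above, the crucial point is the connecting map $\delta\colon {\rm End}(\cF)=H^0(X,\cF\otimes\cF^*)\to H^1(X,\cF\otimes\cK^*)$, which is cup product with the extension class $e$; because $e\neq 0$ at a generic point and ${\rm End}(\cF)$ is one-dimensional, $\delta$ is injective. This injectivity is exactly what removes one dimension, giving $h^1(X,\cF\otimes V^*)=a+b-1$ together with $H^0(X,\cF\otimes V^*)=0$; the second auxiliary sequence similarly gives $h^0(X,\cK\otimes V^*)=1$ and $h^1(X,\cK\otimes V^*)=0$. Finally I would tensor the original sequence by $V^*$ to get $0\to\cF\otimes V^*\to V\otimes V^*\to\cK\otimes V^*\to 0$. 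The remaining connecting map $H^0(X,\cK\otimes V^*)\to H^1(X,\cF\otimes V^*)$ must vanish: since $V$ is stable, $H^0(X,V\otimes V^*)=\mathbb{C}$ is one-dimensional, and as $H^0(X,\cF\otimes V^*)=0$ the injection $H^0(X,V\otimes V^*)\hookrightarrow H^0(X,\cK\otimes V^*)=\mathbb{C}$ is an isomorphism, so the map is surjective and the connecting map is zero. The long exact sequence then collapses to $h^1(X,V\otimes V^*)=h^1(X,\cF\otimes V^*)=a+b-1$, which is the claim after identifying $\cK=V/\cF$.

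The main obstacle --- and the conceptual heart of the ``$-1$'' --- is the injectivity of $\delta$: it is the cohomological incarnation of the super-Higgs mechanism, encoding the fact that turning on $e\neq 0$ breaks the extra $U(1)$ automorphism present at the decomposable locus (where $\delta=0$ and $h^0(X,V\otimes V^*)=2$), thereby eliminating precisely one modulus. As a consistency check I would set $e=0$ and verify that the same sequences instead return $h^1(X,V\otimes V^*)=a+b$ and $h^0(X,V\otimes V^*)=2$ at the split point, matching the naive direct-sum count and the enhanced structure group.
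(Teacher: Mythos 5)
Your proof is correct and follows essentially the same route as the paper's: the same three twisted short exact sequences ($\cF\otimes V^*$, $\cK\otimes V^*$, $V\otimes V^*$), the same vanishings from stability of $V$, $\cF$, $\cK$ and from the hypothesis $H^1(X,\cF^*\otimes V/\cF)=0$ (giving $H^2(X,\cF\otimes\cK^*)=0$ by Serre duality), and the same long-exact-sequence bookkeeping. The only cosmetic difference is that you establish injectivity of the connecting map $H^0(X,\cF\otimes\cF^*)\to H^1(X,\cF\otimes\cK^*)$ as cup product with the nonzero extension class acting on the one-dimensional $\mathrm{End}(\cF)$, whereas the paper deduces the same thing from $H^0(X,\cF\otimes V^*)=0$; the two are equivalent, and your version adds a nice physical reading of the ``$-1$''.
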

\begin{proof}
Consider once again the short exact sequence \eref{ferret} which defines the sub-sheaf $\cF$. In order to relate the generic (stable) bundle moduli of $V$ to the possible deformations of $\cF \oplus {\cal K}$, we will compute $h^1(X, V \otimes V^*)$ using \eref{ferret}. To begin, we consider the following three short exact sequences that follow directly from \eref{ferret}.
\bea
0 \to \cF \otimes V^* \to V \otimes V^* \to {\cal K} \otimes V^* \to 0\label{modu1} \\
0 \to\cF\otimes {\cal K}^* \to \cF\otimes V^* \to \cF \otimes \cF^* \to 0\label{modu2} \\
0 \to {\cal K}\otimes {\cal K}^* \to {\cal K}\otimes V^* \to {\cal K}\otimes \cF^* \to 0\label{modu3} 
\eea
From these sequences we can consider long exact sequences in cohomology. We begin with \eref{modu2}. Using the results of Lemma I, and the fact that for this class of examples $V$ is stable for $\mu(\cF)<0$, we have $H^0(X,\cF \otimes V^*)=0$ and $H^2(X, \cF \otimes {\cal K}^*)=0$. Thus,
\beq
0 \to H^0(X, \cF \otimes \cF^*) \to H^1(X,\cF \otimes {\cal K}^*) \to H^1(X, \cF \otimes V^*) \to  H^0(X, \cF \otimes \cF^*) \to 0~.
\eeq
Next, from \eref{modu3}, we note that since ${\cal K}$ is a line bundle, ${\cal K} \otimes {\cal K}^* \approx \cO$ and hence, $H^1(X, {\cal K} \otimes {\cal K}^*)=0$ and by Lemma I, we have that $H^1(X, {\cal K} \otimes \cF^*)=0$. Further, we have $H^0(X, {\cal K}\otimes \cF^*)=0$ since $\cF$ is stable. Hence, it follows that 
\beq
h^0(X, {\cal K} \otimes V^*)=1~~\text{and}~~h^1(X, {\cal K}\otimes V^*)=0~.
\eeq 
Substituting this information into the cohomology sequence for \eref{modu1}, we find
\beq
0\to H^0(X, V\otimes V^*) \to H^0(X, {\cal K}\otimes V^*) \to H^1(X, \cF \otimes V^*) \to H^1(X, V\otimes V^*) \to 0
\eeq
Then, in terms of dimensions:
\beq
h^1(X, V\otimes V^*)= h^0(X, V\otimes V^*) -h^0(X, {\cal K} \otimes V^*) +h^1(X, \cF \otimes V^*)
\eeq
and upon substitution
\beq
h^0(X, V\otimes V^*) -1+h^0(X, \cF\otimes\cF) +h^1(\cF\otimes\cF)+h^1(X, \cF\otimes{\cal K}^*)~.
\eeq
Finally, since $V$ and $\cF$ are stable $h^0(X, V\otimes V^*)=1=h^1(X, \cF\otimes\cF^*)$ and we arrive at the result,
\beq
h^1(X, V\otimes V^*)=h^1(X, \cF\otimes{\cal K}^*)+h^1(\cF\otimes\cF^*)-1
\eeq
as required.
\end{proof}

\vspace{0.5cm}

We end this section by a statement of a conjecture. This is less easy
to verify than the cohomology conditions described above, though we
have found it to be true in all the cases that we have
investigated. The central result of this paper is the form of the $FI$
D-term given in \eref{thedterm} which reproduces the notion of vector
bundle stability for a supersymmetric, anomaly
free\footnote{Recall that a vector bundle $V$ in the $E_8 \times E_8$
  heterotic theory defines an anomaly free superymmetric theory if
  $ch_{2}(TX)-ch_{2}(V)=W$ where $W$ is an effective class of $X$ \cite{Green:1987mn,bogomolov, Douglas:2006jp}. This condition is necessary here,
  as without it anti five-branes or a non-supersymmetric hidden bundle
  would be required to make the reduction from eleven to four
  dimensions consistent. This would result in a theory which was not
  supersymmetric in four dimensions \cite{Gray:2007qy,Gray:2007zza},
  and as such the analysis of this paper would not apply.} bundle. As
discussed in Section \ref{comparison}, the form of this potential
clearly does not depend on the complex structure moduli of the Calabi-Yau manifold $X$.
In addition, using the techniques of Section \ref{bundlestab}, we have searched through numerous examples, and have
yet to find a complex structure dependent boundary wall for an anomaly
free bundle. As a result, we posit the conjecture:

\begin{conjecture}
Let $V$ be an anomaly-free holomorphic vector bundle with structure group $SU(n)$ ($n=3,4,5$) defined over $X$, a Calabi-Yau $3$-fold. If there exists a wall of semi-stability of $V$ in K\"ahler moduli space (defining the boundary between stable and unstable regions), then the position of this wall is independent of the complex structure moduli of $X$. 
\end{conjecture}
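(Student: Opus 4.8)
The plan is to reduce the conjecture to a statement about a single discrete invariant and then to localise the difficulty in ruling out a jump of that invariant. From Eq.~\eref{s-slope}, the wall associated with a destabilizing sub-sheaf $\cF$ is the hyperplane $c_1^i(\cF)s_i=0$ in the dual K\"ahler cone, and its position depends on the complex structure of $X$ \emph{only} through $c_1(\cF)$: the intersection numbers $d_{ijk}$ and the K\"ahler cone are topological data, insensitive to complex structure. The first observation is then that $c_1(\cF)\in H^2(X,\IZ)$ is an integral, hence discrete, class, and that $h^{1,1}(X)$ is deformation-invariant for a Calabi-Yau threefold. Consequently the wall cannot move continuously under variation of the complex structure; it can only jump. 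The conjecture is therefore equivalent to the assertion that no such jump occurs.

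Second, I would establish persistence of a given destabilizing sub-sheaf. For a fixed reference class, the condition that a sub-sheaf of that class inject into $V$ is the non-vanishing of $h^0(X,\cF^*\otimes V)$ (equivalently $\Hom_X(\cF,V)\neq 0$, cf.~Eq.~\eref{hom}). By upper semi-continuity of cohomology across the complex-structure family, $h^0(X,\cF^*\otimes V)$ takes its minimal value generically and can only jump upward on closed sub-loci of complex-structure moduli space. Hence a sub-sheaf that destabilizes at a generic complex structure continues to destabilize everywhere, and the generically stable region can only \emph{shrink}, never grow, as one passes to special complex structure.

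The main obstacle, and the genuine content of the conjecture, is to exclude the converse phenomenon: the appearance, at special complex structure, of a \emph{new} destabilizing sub-sheaf $\cF'$ with $c_1(\cF')\neq c_1(\cF)$ whose wall lies strictly inside the generic stable region, thereby displacing the boundary. Semi-continuity alone permits such competitors, so additional input is needed, and this is where the anomaly-free hypothesis should enter. I would attempt to bound the admissible Chern classes of any competitor by combining two topological constraints: first, that on a genuine wall both $\cF'$ and $(V/\cF')^*$ must be stable of vanishing slope, so that $V$ is poly-stable there (Section~\ref{bundlestab} and Lemma~I); and second, the anomaly cancellation condition $\ch_2(TX)-\ch_2(V)=W$ with $W$ effective, which constrains the second Chern classes available to any sub-sheaf/quotient pair. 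The aim is to show that these constraints leave no room for a competitor wall interior to the generic one.

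Finally, the physical argument clarifies why anomaly-freeness is indispensable and suggests the expected answer. The D-term \eref{thedterm} governing the wall is manifestly independent of the complex-structure moduli (Section~\ref{comparison}), and its Fayet-Iliopoulos piece vanishes precisely on the wall $\mu(\cF)=0$. For an anomaly-free bundle this D-term \emph{is} the honest leading-order four-dimensional effective potential, so the locus of its vanishing inherits the complex-structure independence of that potential; for a bundle that is not anomaly-free no consistent supersymmetric four-dimensional reduction exists and the argument is unavailable, matching the restriction in the statement. I expect the hard part of any complete proof to be exactly the conversion of this effective-field-theory statement into a rigorous algebro-geometric exclusion of the competing sub-sheaves identified above.
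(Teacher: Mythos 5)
You should first be clear that the paper does not prove this statement: it is presented explicitly as a conjecture, and the authors say outright that ``it is not obvious to the authors how to prove it from an algebraic geometry viewpoint.'' The only support the paper offers is (i) the observation that the D-term \eref{thedterm}, whose vanishing locus defines the wall in the effective theory, contains no complex structure moduli, and (ii) a case-by-case search through examples. Your final paragraph reproduces (i) faithfully, and your reduction of the question to the discreteness of $c_1(\cF)\in H^2(X,\IZ)$, together with the one-sided monotonicity coming from semicontinuity of $h^0(X,\cF^*\otimes V)$, is genuine partial progress beyond what the paper records. You also locate the remaining difficulty exactly where it lies: excluding a new maximally destabilizing sub-sheaf $\cF'$, with a different first Chern class, that appears only on a special complex-structure locus and cuts into the generic stable region. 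Since you leave that step open --- as does the paper --- your proposal is a correct framing of the problem rather than a proof, which is the honest state of affairs for a conjecture.

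Two technical points you gloss over deserve flagging if you pursue this. First, the semicontinuity argument presupposes that both $V$ and the candidate sub-sheaf $\cF$ extend to flat families over the complex-structure moduli space; for a holomorphic bundle on a Calabi-Yau threefold this is obstructed in general (the Atiyah class controls whether $V$ remains holomorphic as the complex structure is deformed), so ``the same $V$ at a different complex structure'' must be made precise before $h^0(X,\cF^*\otimes V)$ can be compared across the family. Second, the anomaly condition $\ch_2(TX)-\ch_2(V)=W$ constrains only the second Chern character of $V$, whereas the wall position is governed by $c_1(\cF')$ of a competitor sub-sheaf; the Bogomolov-type inequalities needed to convert a bound on second Chern characters into a bound on the admissible $c_1(\cF')$ are precisely the input the authors could not supply either, so you should not expect that step to be routine. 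As it stands, your proposal and the paper's discussion share the same irreducible gap.
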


This conjecture is a consequence of our field-theoretical approach to slope-stability but it is not obvious to the authors how to prove it from an algebraic geometry viewpoint.

\section{Appendix B: Another example}\label{AppendixB}

To highlight the versatility of the formalism developed in this paper, in this section we will sketch another example bundle, its regions of stability in the K\"ahler cone and the effective field theory modeling this behavior.

We shall once again consider a bundle defined on a complete intersection Calabi-Yau manifold, $X$. The so-called `bi-cubic' $3$-fold:
\begin{equation}
X= \left[\begin{array}[c]{c}\mathbb{P}^2\\\mathbb{P}^2\end{array}
\left|\begin{array}[c]{ccc}3 \\3
\end{array}
\right.  \right]\; ,
 \label{cicy33}
\end{equation}
defined by a polynomial of bi-degree $(3,3)$ in the ambient space $\mathbb{P}^2\times\mathbb{P}^2$. 
As in our previous example, $h^{1,1}(X)=2$ and the K\"ahler cone is the positive quadrant $t^1\geq 0$ and $t^2\geq 0$. The non-zero triple intersection numbers are given by $d_{122}=3$ and $d_{112}=3$. It follows that the dual K\"ahler moduli $s_1$ and $s_2$ are
\begin{equation}
 s_1=3t^2(2t^1+t^2)\; ,\quad s_2=3t^1(2t^2+t^1)\; . \label{dkah}
\end{equation} 
Hence, the dual K\"ahler cone in this case is the entire positive quadrant. We shall define line bundles $\cO_X(m,n)$ on this space using the same notation as in Section \ref{simpleeg}.

On the given manifold, we define a bundle, $V$, by extension \cite{burt},
\beq\label{extensioneg}
0 \to W \to V \to {\cal L} \to 0~,
\eeq
where ${\cal L}$ is a line bundle and $W$ is a rank 2, $U(2)$ monad bundle defined as follows
\bea\label{Wdef}
0\to W \to \cO_X(2,0)^{\oplus 3} \to \cO_X(2,2) \to 0 \\
{\cal L}=\cO_X(-4,2).~~~~~~~~~~~~~~\nn
\eea 
Since the first chern classes of $\cL$ and $W$ satisfy
\beq
c_1(W)=-c_1(\cL)
\eeq
the extension bundle $V$ defined by \eref{extensioneg} has $c_1(V)=0$ and hence defines an $SU(3)$ bundle. Furthermore, $V$ is a non-trivial extension of $\cL$ by $W$ (i.e. $V$ is not simply the sum $W\oplus \cL$ since $Ext^1(\cL, W) \neq 0$).  The spectrum of the four dimensional $E_6$ theory associated to $V$ consists of $18$ ${\bf 27}$ matter fields and $18$ $\overline{{\bf  27}}$'s for a net chiral asymmetry of zero. In addition, there are generically $h^1(X,V\otimes V^*)=530$ bundle moduli.

We can now ask, what are the regions of stability of $V$ in the K\"ahler cone? A simple analysis using the techniques of Section \ref{constl} verifies first that $W$ is an everywhere stable $U(2)$ bundle, and furthermore, that $W$ is generically the {\it only} de-stabilizing sub-sheaf of $V$. Thus, since $c_1(W)=4J_1-2J_2$, $V$ itself is stable above the line with slope $s_2/s_1=2$ and unstable beneath it. We will now reproduce this geometric result from the point of view of the effective field theory developed in this work.

As was argued in Section \ref{bundlestab}, at the line of semi-stablility in the dual K\"ahler cone defined by $s_2=2s_1$, $V$ will be forced away from an $SU(3)$ configuration towards the structure group $S(U(2)\times U(1))$ (and the four dimensional symmetry will be enhanced to $E_6 \times U(1)$). As in the example given in \ref{thefirstexample}, $V$ decomposes as  $V={\cal F}\oplus{\cal K}$ where in this case ${\cal F}=W$ and ${\cal K}=\cL$ (as defined above in \eref{Wdef}). Note that the split locus is simply the zero of the group $Ext^1(\cL,W)$ which describes the space of possible extensions. Using the results of \cite{Anderson:2007nc,Anderson:2008ex,Anderson:2008uw} to compute the cohomology of $W$ and $\cL$ on the bi-cubic, and the representation decomposition given in Table \ref{table1}, we find that non-vanishing massless spectrum of $V$ at the decomposable locus is given by
\bea\label{nvspec}
h^1(X,W)_{-1/2}=18~~~~~h^1(X,\cL^*)_{-1}=18~,~~~~~~~\\
h^1(X, W\otimes W)_{0}=9~~~h^1(X,W\otimes \cL^*)_{-3/2}=522~.~~~
\eea
The subscript on the cohomology denotes the $U(1)$ charge of the fields. We may now write down the $U(1)$ D-term \eref{thedterm} contribution to the potential

\begin{equation}
\label{dtermeg2}
D^{U(1)} = \frac{3}{16} \frac{\epsilon_S \epsilon_R^2}{\kappa_4^2}\frac{\mu (W)}{\cal V} +\frac{3}{2} \sum_{L,\bar{M}=1}^{16}G_{L \bar{M}} C^L \bar{C}^{\bar{M}}\; ,
\end{equation}
where here, from \eref{Wdef}, the slope is given by
\begin{equation}
 \mu(W)=\frac{1}{2}(4s_1-2s_2)\;
\end{equation} 
while for the volume we have
\begin{equation}
 {\cal V}=\frac{3}{2}(t^1t^2)(t^1+t^2)\; .
\end{equation} 

The relevant charged matter fields $C^L$ in this case are the fields in $H^1(X, W \otimes \cL^*)$, since the fields associated to $H^1(X,W)$ and $H^1(X,\cL^*)$ will have vevs forced to zero by the requirement that $E_6$ remains unbroken. As we would predict based from the algebro-geometric results of the stability analysis, negatively charged matter is present so that the vevs of the charged fields can adjust to cancel the FI term when $\mu(\cF)<0$, setting the D-term to zero. Thus, supersymmetry is preserved for the region of dual K\"ahler moduli space defined by $s_2>2s_1$. However, since there is no positively charged matter available, for the region of moduli space where $\mu(\cF)>0$, the FI term cannot be cancelled and supersymmetry is broken. This is in agreement with what we would expect from the general results of Lemma \ref{lemma_1}.

Finally, for this example, we can verify the general predictions of Lemma \ref{lemma_2} by considering the number of bundle moduli associated to $V$ at a generic point in its moduli space as well as at the decomposable locus. According to Lemma 2, we would expect there to be one extra light modulus at the stability wall. For the bundle $V$ defined by \eref{extensioneg}, at a generic point in its moduli space, $h^1(X,V\otimes V^*)=530$. Moreover, using the results of \eref{nvspec} we observe that at the decomposable locus, the number of bundle moduli is given by $h^1(X,W\otimes W^*) +h^1(X, W\otimes \cL^*)=531$. Thus, as described in Section \ref{comparison}, as we move in K\"ahler moduli space away from the stability wall, one degree of freedom is made massive by the Higgs mechanism,  \eref{u1mass}, as expected.


\end{document}